\newcommand{\B}{\mathcal{B}}
\newcommand{\Real}{\mathbb{R}}
\def \bG{\mathbf{G}}
\def \bX{\mathbf{X}}
\def \bU{\mathbf{U}}
\def \ex {\mathbf{E}}
\def \bN{\mathbf{N}}
\def \bY{\mathbf{Y}}
\def \b1 {\mathbf{1}}
\begin{document}
\title{On sensing capacity of sensor networks for the class of linear observation, fixed SNR models}
\author{Shuchin~Aeron, Manqi
Zhao, and Venkatesh~Saligrama 
\thanks{}
\thanks{The authors are with the department of Electrical and Computer Engineering at
Boston University, MA -02215. They can be reached at
\{shuchin, mqzhao, srv\}@bu.edu}}


\maketitle
\newtheorem{condition}{\indent \bf Condition}[section]
\newtheorem{property}{\indent \bf Property}[section]
\newtheorem{defn}{\indent \bf Definition}[section]
\newtheorem{conj}{\indent \bf Conjecture}[section]
\newtheorem{cor}{\indent \bf Corollary}[section]
\newtheorem{lem}{\indent \bf Lemma}[section]
\newtheorem{claim}{\indent \bf Claim}[section]
\newtheorem{thm}{\indent \bf Theorem}[section]
\newtheorem{prop}{\indent \bf Proposition}[section]
\newtheorem{remark}{\indent \bf Remark}[section]
\newtheorem{example}{\indent \bf Example}[section]
\begin{abstract}
In this paper we address the problem of finding the sensing capacity
of sensor networks for a class of linear observation models and a
fixed SNR regime. Sensing capacity is defined as the maximum number
of signal dimensions reliably identified per sensor observation.
In this context sparsity of the phenomena is a key feature that
determines sensing capacity. Precluding the SNR of the environment
the effect of sparsity on the number of measurements required for
accurate reconstruction of a sparse phenomena has been widely dealt
with under compressed sensing. Nevertheless the development there
was motivated from an algorithmic perspective. In this paper our aim
is to derive these bounds in an information theoretic set-up and
thus provide algorithm independent conditions for reliable
reconstruction of sparse signals. 
In this direction we first generalize the Fano's inequality and
provide lower bounds to the probability of error in reconstruction
subject to an arbitrary distortion criteria. Using these lower
bounds to the probability of error, we derive upper bounds to
sensing capacity and show that for fixed SNR regime sensing capacity
goes down to zero as sparsity goes down to zero. This means that
disproportionately more sensors are required to monitor very sparse
events. We derive lower bounds to sensing capacity (achievable) via
deriving upper bounds to the probability of error via adaptation to
a max-likelihood detection set-up under a given distortion criteria.
These lower bounds to sensing capacity exhibit similar behavior
though there is an SNR gap in the upper and lower bounds.
Subsequently, we show the effect of correlation in sensing across
sensors and across sensing modalities on sensing capacity for
various degrees and models of correlation. Our next main
contribution is that we show the effect of sensing diversity on
sensing capacity, an effect that has not been considered before.
Sensing diversity is related to the effective \emph{coverage} of a
sensor with respect to the field. In this direction we show the
following results (a) Sensing capacity goes down as sensing
diversity per sensor goes down; (b) Random sampling (coverage) of
the field by sensors is better than contiguous location sampling
(coverage). In essence the bounds and the results presented in this
paper serve as guidelines for designing efficient sensor network
architectures.

\end{abstract}

\section{Introduction}
\label{sec:introduction} In this paper we study fundamental limits
to the performance of sensor networks for  a class of linear sensing
models under a fixed SNR regime. Fixed SNR is an important and
necessary ingredient for sensor network applications where the
observations are inevitably corrupted by external noise and clutter.
In addition we are motivated by sensor network applications where
the underlying phenomena exhibits sparsity. Sparsity is manifested
in many applications for which sensor networks are deployed, e.g.
localization of few targets in a large region, search for targets
from among a large number of sites e.g. land mine detection,
estimation of temperature variation for which few spline
coefficients may suffice to represent the field , i.e. phenomena is
sparse under a suitable transformation. More recent applications
such as that considered in \cite{heroSSP05} also involve imaging a
sparse scattering medium.

The motivation for considering linear sensing models comes from the
fact that in most cases the observation at a sensor is a
superposition of signals that emanate from different sources,
locations etc. For e.g., in seismic and underground borehole sonic
applications, each sensor receives signals that is a superposition
of signals arriving from various point/extended sources located at
different places. In radar applications~\cite{heroSSP05,mimoradar},
under a far field assumption the observation system is linear and
can be expressed as a matrix of steering vectors. 
In this case the directions becomes the variable space and one looks
for strategies to optimally search using many such radars.
Statistical modulation of gain factors in different directions is
feasible in these scenarios and is usually done to control the
statistics of backscattered data. In other scenarios the scattering
medium itself induces random gain factors in different directions.

%

In relation to signal sparsity compressive sampling,
\cite{Donoho1,Candes1} has shown to be very promising in terms of
acquiring minimal information, which is expressed as minimal number
of random projections,  that suffices for adequate reconstruction of
sparse signals. Thus in this case too, the observation model is
linear. In \cite{Nowak2} this set-up was used in a sensor network
application for realizing efficient sensing and information
distribution system by combining with ideas from linear network
coding. Also it was used in \cite{Nowak3} to build a wireless sensor
network architecture using a distributed source-channel matched
communication scheme.

For applications related to wireless sensor networks where power
limited sensors are deployed, it becomes necessary to compress the
data at each sensor. For e.g. consider a parking surveillance system
where a network of wireless low resolution cameras are deployed,
\cite{Konrad05}. With each camera taking several snapshots in space
and transmitting all of them to a base station will overwhelm the
wireless link to the base station. Instead transmission overhead is
significantly reduced by sending a weighted sum of the observations.
An illustration is shown in figure \ref{fig:ipark}. A similar set-up
was also considered in \cite{McEliece} for a robotic exploration
scenario.

\begin{figure}[t]
\centering \makebox[0in]{
    \begin{tabular}{cc}
      \psfig{figure= 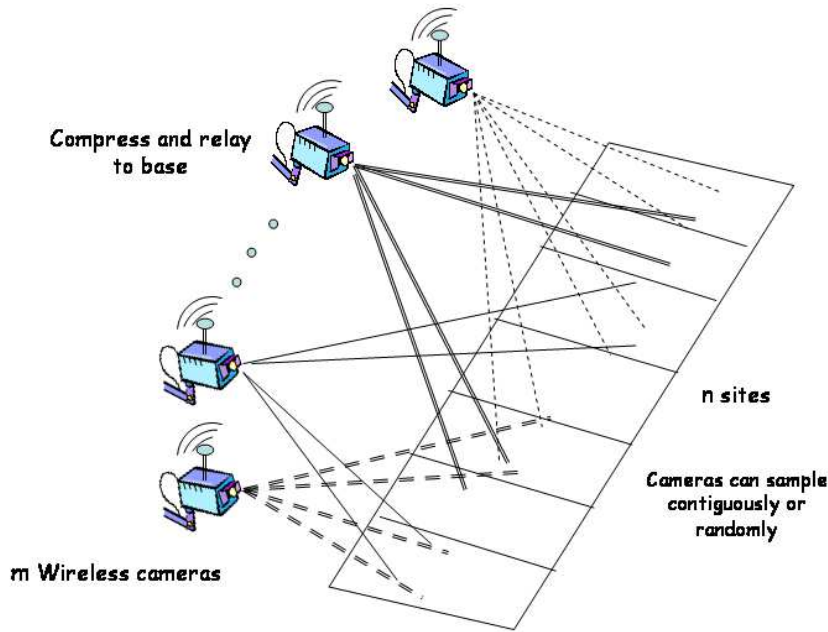,height= 3.5 in}
    \end{tabular}
  }
  \caption{A schematic of I-Park: a parking lot monitoring system.}
  \label{fig:ipark}
\end{figure}

Motivated by the scenarios considered above we start with sensing
(observation) models where at a sensor the information about the
signal is acquired as a  projection of the signal onto a weight
vector. Under this class of observation model, the sensing model is
linear and is essentially a matrix, $\bG \in \Real^{m \times n}$
chosen from some appropriate class particular to the application. In
this work we consider a fixed $SNR$ model (see also \cite{Nowak1})
where the observations at $m$ sensors for the signal $\bX \in {\cal
X}^n$ are given by,

\begin{equation} \label{eq:fixsnr}\bY = \sqrt{SNR}\,\, \bG \bX +
\bN \end{equation}

where each row of the matrix $\bG$ is restricted to have a unit
$\ell_2$ norm and where $\bN$ is the noise vector with unit noise
power in each dimension. It is important to consider fixed SNR
scenario particularly for applications related to sensor networks.
Practically each sensor is power limited. In an active sensing
scenario the sensors distribute this power to sense different
modalities, or to look (beamform) in various directions. Thus we
restrict the $\ell_2$ norm of each row of $\bG$ to be unity and then
scale the system model appropriately by $SNR$. For a networked
setting we assume that the observations made at the sensors are
available for processing at a centralized location or node. In case
when this is infeasible or costly, information can be exchanged or
aggregated at each sensor using distributed consensus type
algorithms, such as that studied in \cite{srv_murat1}.

In order utilize the information theoretic ideas and tools, we adopt
a Bayesian perspective and assume a prior distribution on $\bX$.
Another motivation for considering a Bayesian set-up is that one can
potentially model classification/detection scenarios where prior
information is usually available and is useful. Note that under some
technical conditions it can be shown that a lower bound to the
Bayesian error is also lower bound to worst case probability of
error for the parametric set-up. Therefore the lower bounds
presented in this paper also provide lower bounds to the parameter
estimation problem.

In this paper we capture the system performance via evaluating
asymptotic upper and lower bounds to the ratio $C(d_0) =
\frac{n}{m}$ such that reconstruction to within a distortion level
$d_0$ is feasible. We call the ratio $C(d_0)$ as \emph{sensing
capacity} : the number of signal dimensions reliably identified per
projection (sensor). This term was coined in \cite{Rachlin1} in the
context of sensor networks for discrete applications. Alternatively,
bounds to $C(d_0)$ can be interpreted as providing \emph{scaling
laws} for the minimal number of sensors/projections required for
reliable monitoring/signal reconstruction.

For a signal sparsity level of $k$, a different ratio of
$\frac{k}{m}$ also seems to be a reasonable choice, but in most
cases $k$ is unknown and needs to be determined, e.g., target
density, or sparsest signal reconstruction. Here it is important to
\emph{penalize false alarms, misclassification costs}. Furthermore,
$n$ and $m$ are known and part of the problem specification, while
signal complexity is governed by $k$, and one of our goals is to
understand performance as a function of signal complexity. In this
paper we show that sensing capacity $C(d_0)$ is also a function of
signal sparsity apart from $SNR$.

The upper bounds to $C(d_0)$ are derived via finding lower bounds to
the probability of error in reconstruction subject to a distortion
criteria, that apply to any algorithm used for reconstruction. The
achievable (lower) bounds to $C(d_0)$  are derived via upper
bounding the probability of error in a max-likelihood detection
set-up over the set of rate distortion quantization points. Since
most of the development for these classes of problems has been
algorithmic, \cite{Donoho1,Nowak1}, our motivation for the above
development is driven by the need to find fundamental
\emph{algorithm independent bounds} for these classes of problems.
In particular, under an i.i.d model on the components of $\bX$ that
models a priori information, e.g. sparsity of $\bX$, and letting
$\hat{\bX}(\bY)$ denote the reconstruction of $\bX$ from $\bY$, then
we show that,

\begin{eqnarray}
\label{eq:lowerboundpe} Pr\left( \frac{1}{n}d(\hat{\bX}(\bY),\bX)
\geq d_0\right)  \geq \dfrac{R_{X}(d_0) - K(d_0,n) - \frac{1}{n}
I(\bX;\bY|\bG)}{R_{X}(d_0) } - o(1)
\end{eqnarray}

for some appropriate distortion measure $d(.,.)$ and where
$R_{X}(d_0)$ is the corresponding scalar rate distortion function;
$K(n,d_0)$ is bounded by a constant and it depends on the number of
neighbors of a quantization point in an optimal $n-$dimensional rate
distortion mapping.

Next, we consider the effect of structure of $\bG$ on the
performance. Using the result on the lower bound on the probability
of error given by equation~(\ref{eq:lowerboundpe}), a necessary
condition is immediately identified in order that the reconstruction
to within an average distortion level $d_0$ is feasible, which is,
$R_{X}(d_0) - K(n,d_0) \leq \dfrac{1}{n}I(\bX;\bY|\bG)$. For a fixed
prior on $\bX$ the performance is then determined by the mutual
information term that in turn depends on $\bG$. This motivates us to
consider the effect of the structure of $\bG$ on the performance and
via evaluation of $I(\bX;\bY|\bG)$ for various ensembles of $\bG$ we
quantify the performance of many different scenarios that restrict
the choice of $\bG$ for sensing. Under the case when $\bG$ is chosen
independently of $\bX$ and randomly from an ensemble of matrices (to
be specified later in the problem set-up), we have

\begin{eqnarray}
\label{eq:structure_init}
I(\bX;\bY,\bG) & = &\underset{=0}{\underbrace{I(\bX;\bG)}} + I(\bX;\bY|\bG)\\
                & = & I(\bX;\bY) + I(\bX;\bG|\bY)\\
\label{eq:structure_end} \Rightarrow I(\bX;\bY|\bG)  & = &
I(\bX;\bY) + I(\bX;\bG|\bY)
\end{eqnarray}
This way of expanding allow us to \emph{isolate} the effect of
structure of the sensing matrix $\bG$ on the performance which in
principle influences bounds on $C(d_0)$ through the change in mutual
information as captured via the equations
\ref{eq:structure_init}-\ref{eq:structure_end} and as applied to
satisfy the necessary conditions prescribed by the lower bound in
equation~(\ref{eq:lowerboundpe}).

Using the above idea, in this paper we will show the effect of
sensing diversity on the performance, a concept which is explained
next. Under the sensing model as prescribed above, at each sensor
one can relate each component of the corresponding projection vector
as contributing towards \emph{diversity} in sensing. The total
number of non-zero components in the projection vector is called
sensing diversity. This terminology is analogous to that used in
MIMO systems in the context of communications. As will be shown
later on that loss in sensing capacity is not very significant at
reasonable levels of sensing diversity (with randomization in
sampling per sensor). In fact there is a saturation effect that
comes into play, which implies that most of the gains can be
obtained at diversity factor close to $0.5$. Now if one considers
the noiseless case, i.e. $\bY = \bG \bX$, then it was shown in
\cite{Donoho1} that for some $m$ and for some sparsity $k$ as a
function of $n$ and the coherence of the sensing matrix, an $\ell_1$
optimization problem :

\[ \begin{array}{l} \min ||\bX||_1  \\
                    \mbox{subject to}: \,\, \bY = \bG \bX, \,\, \bX
                    \geq 0
                    \end{array} \]

yields exact solution. To this end note that if $\bG$ is sparse then
solving the above system is computationally faster as is shown in
\cite{Boyd}.

There are other types of modalities that arise in the context of
resource constrained sensor networks. As an example consider the
application in \cite{Konrad05} where each camera may be physically
restricted to sample \emph{contiguous locations} in space or under
limited memory it is restricted to sample few locations, possibly at
\emph{random}. This motivates us to consider other structures on
$\bG$ under such modalities of operation. In this paper we will
contrast random sampling and contiguous sampling and show that
random sampling is better than contiguous sampling. In such
scenarios it becomes important to address a \emph{coverage} question
and in some cases may lead to a poor performance. In highly resource
constrained scenarios randomization in elements of $\bG$ is not
feasible. In this direction we also consider an ensemble of
$\left\{0,1\right\}$ matrices, with and without randomization in the
locations of non-zero entries in each row. To facilitate the reading
of the paper we itemize the organization as follows.

\begin{itemize}

\item[1.] We present the problem set-up in section
\ref{sec:probsetup} where we make precise the signal models and the
ensembles of sensing matrices that will be considered in relation to
different sensor networking scenarios.

\item[2.] In section \ref{sec:lbPe} we will present the lower bounds to
the probability of error in reconstruction subject to an average
distortion criteria. The development is fairly general and is
self-contained.

\item[3.] In section \ref{sec:ubPe} we will present a \emph{constructive} upper bound to the probability
of error in reconstruction subject to an average $\ell_2$ distortion
criteria. The development there is particular to the fixed SNR
linear sensing model that is the subject of the present paper,
though the ideas are in general applicable to other sensing models
and to other classes of distortion measures.

\item[4.] Once we establish the upper and lower bounds, we will use the
results to obtain upper and lower bounds to sensing capacity for the
fixed SNR linear sensing models, in sections
\ref{sec:upperbound_scap} and \ref{sec:lowerbound_scap}. In these
sections we will consider the full diversity Gaussian ensemble for
sensing matrix. The motivation to consider this model is that the
mutual information and moment generating functions are easier to
evaluate for the Gaussian ensemble. This is thus useful to gain
initial insights into the tradeoffs of \emph{signal sparsity} and
SNR.

\item[5.] Since the bounds to sensing capacity can be interpreted as
providing bounds for number of projections/sensors for reliable
monitoring, in section \ref{sec:compare} we will compare the scaling
implied by bounds to sensing capacity to that obtained in
\cite{Nowak1} in the context of complexity penalized regularization
framework.

\item[6.] In section \ref{sec:structure} we consider the effect of the structure
of the sensing matrix $\bG$ on sensing capacity. The section is
divided into several subsections. We begin by considering the effect
of \emph{sensing diversity} on sensing capacity. Following that we
consider the effect of correlation in the columns of $\bG$ on
achievable sensing capacity. Then we consider a very general case of
a \emph{deterministic} sensing matrix and via upper bounding the
mutual information we comment on the performance of various types of
sensing architectures of interest.

\item[7.] In section \ref{sec:ub_01ensemble} we consider the
$\left\{0,1\right\}$ ensemble for sensing matrices and provide upper
bounds to sensing capacity for various modalities in sensing.

\item[8.] In section \ref{sec:func_est} we give an example of how our
methods can be extended to handle cases when one is interested in
reconstruction of functions of $\bX$ rather than $\bX$ itself. In
this direction we will consider the case of recovery of sign
patterns of $\bX$.

\end{itemize}

\section{Problem Set-up}
\label{sec:probsetup} Assume that the underlying signal $\bX$ lies
in an n-dimensional space ${\cal X}^n$, where ${\cal X}$ can be
discrete or continuous. Discrete ${\cal X}$ models scenarios of
detection or classification and continuous ${\cal X}$ models
scenarios of estimation.

\paragraph{Fixed SNR model}: The observation model for the sensors is a linear observation
model and is given by,

\begin{eqnarray}
\label{eqn:fixedsnr} \bY = \sqrt{SNR}\,\, \bG \bX + \bN
\end{eqnarray}

which is the fixed $SNR$ model as described in the introduction. The
matrix $\bG \in \Real^{m\times n}$ is a random matrix selected from
an ensemble which we will state subsequently. For all $m,n$ each row
of $\bG$ is restricted to have a unit $\ell_2$ norm. The noise
vector $\bN$ is i.i.d. Gaussian unit variance in each dimension.

\subsection{Discussion about fixed SNR model}
At this point it is important to bring out an important distinction
of the assumption and subsequently analysis of a fixed SNR model in
contrast to similar scenarios considered but in albeit high SNR
setting. The observation model of equation \ref{eq:fixsnr} studied
in this paper is related to a class of problems that have been
central in statistics. In particular it is related to the problem of
regression for model order selection. In this context the subsets of
columns of the sensing matrix $\bG$ form a model for signal
representation which needs to be estimated from the given set of
observations. The nature selects this subset in a
weighted/non-weighted way as modeled by $\bX$. The task is then to
estimate this model order and thus $\bX$. In other words estimate of
$\bX$ in most cases is also linked to the estimate of the model
order under some mild assumptions on $\bG$. Several representative
papers in this direction are~\cite{Tibshirani96,knight00,Fan01} that
consider the performance of several (signal) complexity penalized
estimators in both parametric and non-parametric framework. One of
the key differences to note here is that the analysis of these
algorithms is done for the case when $SNR \rightarrow \infty$, i.e.
in the limit of high SNR which is reflected by taking the additive
noise variance to go to zero or not considering the noise at all.
However \emph{SNR is an important and necessary ingredient} for
applications related to sensor networks and therefore we will not
pursue a high SNR development here. Nevertheless the results
obtained are directly applicable to such scenarios.

In the next section we will first outline prior distribution(s) on
$\bX$, that reflect the sparsity of the signal $\bX$ and the model
for realizing sensing diversity in the sensing matrix $\bG$. Then we
will outline the choices of ensembles for the sensing matrix $\bG$.
In the following ${\cal N}(m,\sigma^2)$ denotes the Gaussian
distribution with mean $m$ and variance $\sigma^2$.

\subsection{Generative models of signal sparsity and sensing
diversity}
\paragraph{Signal sparsity} In a Bayesian set-up we model the
sparsity of the phenomena by assuming a mixture distribution on the
signals $\bX$. In particular the $n$ dimensional vector $\bX =
X_1,...,X_n$ is a sequence drawn i.i.d from a mixture distribution

\[ P_{X} = \alpha {\cal N}(m_1,\sigma_{1}^2) + (1 - \alpha) {\cal N}(m_0,\sigma_{0}^{2}) \]

where $\alpha \leq \frac{1}{2}$. In this paper we consider two
cases.
\begin{enumerate}
\item \textbf{Discrete Case}: $m_1 = 1$ and $m_0 = 0$ and $\sigma_1 = \sigma_0 =
0$. This means that $\bX$ is a Bernoulli$(\alpha)$ sequence. This
models the discrete case for addressing problems of target
localization, search, etc.
\item \textbf{Continuous Case}: $m_1 = m_2 = 0$ but $\sigma_{1}^{2} = 1$ and
$\sigma_{0}^{2} = 0$. This models the continuous case.
\end{enumerate}

In this context we call $\alpha$ the sparsity ratio which is held
fixed for all values of $n$. Under the above model, on an average
the signal will be $k$ sparse where $k = \alpha n$. Note that $k
\rightarrow \infty$ as $n \rightarrow \infty$.

\paragraph{Sensing diversity and ensemble for $\bG$}  In
connection to the model for diversity, the sensing matrix $\bG$ is
random matrix such that for each row $i$, $\bG_{ij}, j = 1,2,..,n$
are distributed i.i.d according to a mixture distribution, $(1 -
\beta) {\cal N}(m_0,\sigma_{0}^{2}) + \beta {\cal
N}(m_1,\sigma_{1}^{2})$. We consider three cases:
\begin{enumerate}
\item \textbf{Gaussian ensemble}: $m_1 = m_0 = 0$ and $\sigma_1 = 1 ; \sigma_0
= 0$
\item \textbf{Deterministic $\bG$}: The matrix $\bG$ is
deterministic.
\item \textbf{$\left\{0,1\right\}^{m \times n}$ ensemble}: $m_1 = 1;
m_0 = 0$ and $\sigma_1 = \sigma_0 = 0$.
\end{enumerate}

The matrix is then normalized so that each row has a unit $\ell_2$
norm. In this context we call $\beta$ as the (sensing) diversity
ratio. Under the above model, on an average each sensor will have a
diversity of $l = \beta n$. Note that $l \rightarrow \infty$ as $n
\rightarrow \infty$. Given the set-up as described above the problem
is to find upper and lower bounds to
\[ C(d_0) = \limsup \left\{\frac{n}{m} : Pr\left(\frac{1}{n} d(\hat{\bX}(\bY), \bX) > d_0\right)
\rightarrow 0 \right\}\]

where $\hat{\bX}(\bY)$ is the reconstruction of $\bX$ from
observation $\bY$ and where $d(\bX,\hat{\bX}(\bY) = \sum_{i=1}^{n}
d(X_i,\hat{X}_i(\bY))$ for some distortion measure $d(.,.)$ defined
on ${\cal X}\times {\cal X}$. In this paper we will consider Hamming
distortion measure for discrete $\bX$ and squared distortion measure
for the continuous $\bX$. Under this set-up we exhibit the following
main results:

\begin{enumerate}
\item Sensing capacity $C(d_0)$ is also a function of $SNR$, signal sparsity
and sensing diversity.

\item For a fixed SNR sensing capacity goes to zero as sparsity goes to zero.

\item Low diversity implies low sensing capacity.

\item Correlations across the columns and across the rows of $\bG$
leads to decrease in sensing capacity.

\item For the $\left\{0,1\right\}$ ensemble for sensing matrices,
sensing capacity for random sampling is higher than for contiguous
sampling.

\end{enumerate}

In the next section we will provide asymptotic lower bounds on the
probability of error in reconstruction subject to a distortion
criteria. Following that we will provide a constructive upper bound
to the probability of error. We will then use these results to
evaluate upper and lower bounds to sensing capacity. In the
following we will use $\bX$ and $X^n$ interchangeably.

\section{Bounds to the performance of estimation algorithms: lower bounds}
\label{sec:lbPe}

\begin{lem}
\label{lem:lowerbound} Given observation(s) $\bY$ for the sequence
$X^n \triangleq \left\{X_1,...,X_n\right\}$ of random variables
drawn i.i.d. according to $P_{X}$. Let $\hat{X}^n(\bY)$ be the
reconstruction of $X^n$ from $\bY$. Also is given a distortion
measure $d(X^n,\hat{X}^n(\bY)) = \sum_{i=1}^{n}
d(X_i,\hat{X}_i(\bY))$ then,

\[\begin{array}{l} Pr\left( \frac{1}{n}d(\hat{X}^n(\bY),X^n)  \geq
d_0\right)   \geq \dfrac{R_{X}(d_0) - K(d_0,n) -
\frac{1}{n}I(X^n;\bY)}{R_{X}(d_0) } - o(1)
\end{array} \]

where $K(d_0,n)$ is bounded by a constant and where $ R_{X}(d_0)$ is
the corresponding (scalar) rate distortion function for $X$.
\end{lem}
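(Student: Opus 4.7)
The plan is to establish a Fano-type inequality tailored to reconstruction under a distortion criterion, via the standard chain of (i) data processing, (ii) entropy decomposition on the distortion event, and (iii) Shannon's rate-distortion covering/converse. The target is to lower bound $I(X^n;\bY)/n$ by $(1-p) R_X(d_0)$ up to vanishing terms and a finite-$n$ neighbor-count correction $K(d_0,n)$, and then rearrange.

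First I would observe that since $\hat{X}^n$ is a (measurable) function of $\bY$, the data processing inequality gives $I(X^n;\hat{X}^n) \leq I(X^n;\bY)$, so it suffices to lower bound $I(X^n;\hat{X}^n)$. Write $I(X^n;\hat{X}^n) = H(X^n) - H(X^n|\hat{X}^n)$ (differential entropy in the continuous case), let $E = \{\tfrac{1}{n} d(X^n,\hat{X}^n) \geq d_0\}$ with $p = \Pr(E)$, and introduce the indicator $1_E$. Using
\[
H(X^n|\hat{X}^n) \leq H(1_E,X^n|\hat{X}^n) \leq 1 + p\, H(X^n|\hat{X}^n,E) + (1-p)\, H(X^n|\hat{X}^n,E^c),
\]
I would bound the first conditional entropy trivially by $H(X^n) = n H(X)$, so the whole problem reduces to controlling $H(X^n|\hat{X}^n,E^c)$.

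The crucial step is this last bound: conditionally on $\hat{X}^n$ and on the event that $X^n$ lies within per-symbol distortion $d_0$ of $\hat{X}^n$, the effective support of $X^n$ is a distortion ball around $\hat{X}^n$. By the covering side of rate-distortion theory applied to the i.i.d.\ source $P_X$, the probability (or measure) of any such distortion ball is at most $2^{-n (R_X(d_0) - K(d_0,n))}$, so its log-cardinality (discrete case) or differential-entropy contribution (continuous case) is at most $n H(X) - n R_X(d_0) + n K(d_0,n)$. The term $K(d_0,n)$ absorbs the finite-$n$ slack in this covering: an optimal $n$-dimensional rate-distortion codebook has $\approx 2^{n R_X(d_0)}$ codewords, but each typical source sequence falls into the distortion balls of several codewords (its "neighbors"), and $K(d_0,n)$ is the per-symbol logarithm of this neighbor count, which standard typicality/packing arguments show remains bounded in $n$.

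Chaining the three bounds yields $H(X^n|\hat{X}^n) \leq n H(X) - (1-p) n R_X(d_0) + (1-p) n K(d_0,n) + 1$, whence
\[
\frac{I(X^n;\bY)}{n} \geq \frac{I(X^n;\hat{X}^n)}{n} \geq (1-p) R_X(d_0) - (1-p) K(d_0,n) - \frac{1}{n},
\]
which on solving for $p$ gives the stated bound, with the residual $1/n$ and the $(1-p)$ factor on $K$ swept into the $o(1)$ term. The main obstacle is the rigorous bound on $H(X^n|\hat{X}^n,E^c)$ in the continuous (squared-distortion) case, where one cannot use cardinalities directly and must estimate Lebesgue measures/probabilities of distortion balls through a single-letterized covering-lemma version of the rate-distortion theorem; correctly isolating the neighbor-count correction $K(d_0,n)$ and verifying that it remains bounded uniformly in $n$ is the delicate part of the argument.
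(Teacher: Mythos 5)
Your overall architecture (error-event decomposition, data processing, rearrange for $p$) is the right spirit, but two load-bearing steps fail. First, bounding the error branch $H(X^n\mid \hat X^n, E)$ by $H(X^n)=nH(X)$ is not valid: conditioning on an event can increase entropy, so the only generic bound is $n\log|\mathcal X|$ in the discrete case --- which is precisely why the paper's finite-alphabet version (Lemma \ref{lem:discrete_LB}) ends up with denominator $n\log(|\mathcal X|)-n\left(h(d_0)+d_0\log(|\mathcal X|-1)\right)$ rather than $nR_X(d_0)$ --- and in the continuous sparse-mixture model there is no such bound at all; worse, $P_X$ there has an atom at zero, so $h(X^n)$ is not finite and the identity $I(X^n;\hat X^n)=H(X^n)-H(X^n\mid\hat X^n)$ you start from is already ill-posed. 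Second, your key quantitative claim, that \emph{every} distortion ball has $P_{X^n}$-probability at most $2^{-n(R_X(d_0)-K(d_0,n))}$, is false for arbitrary centers: for a Bernoulli$(p)$ source under Hamming distortion the ball of radius $nd_0$ around the all-zeros center has probability $\approx 2^{-nD(d_0\|p)}$, and $D(d_0\|p)$ is strictly smaller than $R_X(d_0)=h(p)-h(d_0)$ for small $p$ (e.g. $p=0.11$, $d_0=0.05$ gives about $0.03$ versus $0.21$ bits); the covering/achievability side of rate-distortion controls balls around codewords typical of the \emph{optimal reproduction distribution}, not around every candidate $\hat X^n$, and the problematic centers (the all-zeros reconstruction) are exactly the relevant ones for a sparse prior. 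So the step from ball probability to the bound $H(X^n\mid\hat X^n,E^c)\le nH(X)-nR_X(d_0)+nK(d_0,n)$ does not go through, in either the discrete or the continuous case.

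The paper avoids both problems by never running the Fano argument on $X^n$ itself. It fixes a minimal $d_0$-cover of $\mathcal X^n$ with $N_\epsilon(n,d_0)\approx 2^{nR_X(\epsilon,d_0)}$ cells (Lemma \ref{lem:mincover}), applies the two-way entropy expansion to the \emph{discrete quantization index} $f(X^n)$ jointly with a typicality event $A_n$, uses $H(f(X^n))\ge I(f(X^n);X^n)\ge nR_X(\epsilon,d_0)$ together with data processing $H(f(X^n)\mid \bY)\ge H(f(X^n))-I(X^n;\bY)$, and bounds the no-error branch by $\log|\mathcal S|$, where $\mathcal S$ is the set of cells within set distance $d_0$ of the true cell; $K(n,d_0)=\frac1n\log|\mathcal S|$ is this neighbor count, and its boundedness for squared distortion is imported from \cite{zegerIT94} rather than from a typicality/packing argument. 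Because the index alphabet has size roughly $2^{nR_X(d_0)}$, the error branch contributes about $nR_X(d_0)$, which is what produces the denominator $R_X(d_0)$ in the statement; your version, even if its two steps were repaired with the correct trivial bounds, would only recover a bound of the type of Lemma \ref{lem:discrete_LB} and only for finite alphabets. To make your route rigorous you would have to reintroduce the quantized variable (or count typical sequences inside cells of a cover), at which point you are reconstructing the paper's proof.
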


\begin{proof} See Appendix. \end{proof}
Essentially, $K(n,d_0) = \frac{1}{n} \times \log (\sharp$ neighbors
of a quantization point in an optimal n-dimensional rate-distortion
mapping). NOTE: The assumption of a scalar valued process in lemma
\ref{lem:lowerbound} is taken for the sake of simplicity. The
results are easily generalizable and can be extended to the case of
vector valued processes.

For the simpler case of discrete parameter space, the lower bound to
the minimax error in a parameter estimation framework is related to
the Bayesian error as follows,

\begin{eqnarray}
\min_{\hat{\bX}(\bY)} \max_{\bX \in \Theta}
Pr\left(\frac{1}{n}d(\bX,\hat{\bX}(Y))\geq d_0 \right) & = &
\min_{\hat{\bX}(\bY)} \max_{P_{\Theta} \in {\cal P}_{\theta}} \sum
_{\bX \in \Theta} P(\bX)
Pr\left(\frac{1}{n}d(\bX,\hat{\bX}(Y))\geq d_0\right) \\
& \geq &  \min_{\hat{\bX}(\bY)}  \sum_{\bX\in \Theta}
\pi(\bX)Pr\left(\frac{1}{n}d(\bX,\hat{\bX}(Y))\geq d_0\right)
\end{eqnarray}

where $\Theta$ is the parameter space and ${\cal P}_{\Theta}$ is the
class of probability measures over $\Theta$ and $\pi \in {\cal P}$
is any particular distribution. The above result holds true for the
case of continuous parameter space under some mild technical
conditions. Thus a lower bound to the probability of error as
derived in this paper also puts a lower bound on the probability of
error for the parametric set-up. In our set-up we will choose $\pi$
as a probability distribution that appropriately models the a priori
information on $\bX$, e.g. signal sparsity. For modeling simple
priors such as sparsity on $\bX$ one can choose distributions that
asymptotically put most of the mass uniformly over the relevant
subset of $\Theta$ and is a key ingredient in realization of the
lower bound on probability of error derived in this paper.

We have the following corollary that follows from lemma
\ref{lem:lowerbound}.
\begin{cor}
Let $X^n = X_1,..,X_n$ be an i.i.d. sequence where each $X_i$ is
drawn according to some distribution $P_{X}(x)$ and $X^n \in {\cal
X}^n$, where $|{\cal X}|$ is finite. Given observation $\bY$ about
$X^n$ we have,

\[ Pr( X^n \neq \hat{X}^n(\bY)) \geq \frac{H(X) - \frac{1}{n} I(X^n;\bY) - 1/n}{H(X) + o(1)} - o(1) \]
\end{cor}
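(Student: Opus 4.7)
The plan is to deduce the corollary from Lemma~\ref{lem:lowerbound} by specializing to the Hamming distortion $d(x,\hat x) = \mathbf{1}[x \neq \hat x]$ and choosing the per-letter distortion threshold small enough that the average distortion event coincides with the block error event. Concretely, because $d(X^n,\hat X^n(\bY)) = \sum_{i=1}^n \mathbf{1}[X_i \neq \hat X_i(\bY)]$ is the Hamming weight of the error pattern, the choice $d_0 = 1/n$ makes $\{\tfrac{1}{n} d(X^n, \hat X^n(\bY)) \geq d_0\}$ identical to $\{X^n \neq \hat X^n(\bY)\}$. Substituting into Lemma~\ref{lem:lowerbound} yields
\[
Pr(X^n \neq \hat X^n(\bY)) \;\geq\; \frac{R_X(1/n) - K(1/n,n) - \tfrac{1}{n} I(X^n;\bY)}{R_X(1/n)} - o(1),
\]
and the remainder of the argument reduces to controlling $R_X(1/n)$ and $K(1/n,n)$.

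For a finite alphabet $\mathcal{X}$ with Hamming distortion, the scalar rate-distortion function admits the standard closed form $R_X(d_0) = H(X) - h(d_0) - d_0 \log(|\mathcal{X}|-1)$ on a small interval around $d_0=0$ (and the Shannon lower bound of the same form in general). Substituting $d_0 = 1/n$ gives $R_X(1/n) = H(X) - O(\log n / n) = H(X) + o(1)$, which matches the $H(X) + o(1)$ denominator in the statement. For $K(1/n,n)$, observe that the Hamming ball of radius $n d_0 = 1$ around any codeword in $\mathcal{X}^n$ contains at most $1 + n(|\mathcal{X}|-1)$ sequences, so $K(1/n,n) \leq \tfrac{1}{n} \log(1 + n(|\mathcal{X}|-1)) = O(\log n / n)$, which is at worst an additional vanishing term.

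Combining the two estimates, the numerator becomes $H(X) - \tfrac{1}{n} I(X^n;\bY) - O(\log n / n)$, and absorbing the $O(\log n / n)$ correction into the $1/n$ slack written in the stated bound (which should be interpreted as an asymptotic upper bound on the lower-order contributions) recovers the inequality claimed by the corollary. The only potential obstacle is verifying that the Shannon lower bound on $R_X$ and the combinatorial count of Hamming neighbors compose to give error terms vanishing at the announced rate in the joint limit $n \to \infty$, $d_0 = 1/n \to 0$; both are immediate, so the corollary is essentially a one-line reduction from Lemma~\ref{lem:lowerbound}.
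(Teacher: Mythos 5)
Your reduction captures the right idea (block error equals Hamming distortion at threshold $1/n$, $R_X(d_0)\to H(X)$ as $d_0\to 0$, neighbor counts only polynomial in $n$), and these are exactly the quantities that produce the $1/n$ and the $H(X)+o(1)$ in the corollary. But the step "substitute $d_0=1/n$ into Lemma \ref{lem:lowerbound}" is not licensed by the lemma as stated: the lemma is an asymptotic statement for a \emph{fixed} distortion level $d_0$, and its hidden error terms are not uniform in $d_0$. Its proof rests on the covering asymptotics of Lemma \ref{lem:mincover} (which fix $d_0$ and let $n\to\infty$), on the convergence $R_X(\epsilon,d_0)\to R_X(d_0)$ as $\epsilon\downarrow 0$, and on the bound $I(f(X^n);X^n)\geq n R_X(\epsilon,d_0)$ for a minimal cover at level $d_0$; none of these is established with $d_0=d_0(n)\to 0$, so treating the lemma as a black box with an $n$-dependent distortion level is an unjustified interchange of limits. (A smaller point: $K(d_0,n)$ counts quantization cells whose \emph{set} distance from the cell of $f(X^n)$ is at most the distortion threshold, not merely sequences within Hamming radius $1$; the count is still polynomial in $n$, so your $O(\log n/n)$ conclusion survives, but the bound you wrote is not the right object.)

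The intended argument — the paper gives none explicitly, saying only that the corollary "follows from" the lemma — is to rerun the lemma's proof, not its statement, for the exact-recovery event: take the error event $E_n=\mathbf{1}[X^n\neq\hat X^n(\bY)]$, take $f$ to be the identity on the typical set (so every "distortion ball" is a single sequence, $|\mathcal{S}|=1$ and $K=0$), and use $|T_{P_{X^n}}|\leq 2^{n(H(X)+\delta)}$ in place of the covering number. The same entropy expansion $H(X^n,E_n,A_n|\bY)$ then gives $H(X)-\frac{1}{n}I(X^n;\bY)-\frac{1}{n}$ in the numerator and $H(X)+o(1)$ in the denominator directly, with no rate-distortion machinery and no limit interchange. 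If you want to keep your formulation via $d_0=1/n$, you must either carry out this re-derivation (which is what actually proves the corollary) or prove that the $o(1)$ terms and the bound on $K(d_0,n)$ in Lemma \ref{lem:lowerbound} hold uniformly for $d_0$ in a neighborhood of $0$, which the paper does not establish.
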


\subsection{Tighter bounds for discrete ${\cal X}$ under hamming
distortion}

The results in the previous section can be stated for any finite $n$
without resorting to the use of AEP for the case of discrete
alphabets, with hamming distortion as the distortion measure and for
certain values of the average distortion constraint $d_0$. We have
the following lemma.
\begin{lem}
\label{lem:discrete_LB} Given observation(s) $\bY$ for the sequence
$X^n \triangleq \left\{X_1,...,X_n\right\}$ of random variables
drawn i.i.d. according to $P_{X}$.  Then for hamming under
distortion measure $d_{H}(.,.)$, for $X_i \in {\cal X},\,\, |{\cal
X}| < \infty$ and for distortion levels, $d_0 \leq (|{\cal X}| -
1)\min_{X\in {\cal X}} P_{X}$,

\[ \begin{array}{l} Pr( \frac{1}{n}d_{H}(X^n,\hat{X}^n(\bY) \geq d_0))
 \geq \dfrac{nR_{X}(d_0) - I(X^n;\bY) - 1}{n \log(|{\cal X}|) - n \left(h(d_0) +
d_0 \log (|{\cal X}|-1)\right)} \end{array} \]
\end{lem}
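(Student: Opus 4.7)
The plan is to mimic the generalized-Fano argument that must underlie Lemma \ref{lem:lowerbound}, but to replace its asymptotic/AEP-based ``typical set'' bookkeeping with explicit cardinality bounds on Hamming balls, which are available in closed form. This is what lets us get a clean finite-$n$ inequality, and the condition $d_0 \le (|\mathcal{X}|-1) \min_{X\in\mathcal{X}} P_X$ is precisely what guarantees the scalar Hamming rate--distortion function takes the closed form $R_X(d_0) = H(X) - h(d_0) - d_0 \log(|\mathcal{X}|-1)$, which explains the appearance of this quantity in the denominator of the bound.

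More concretely, let $P_e = \Pr\!\left(\tfrac{1}{n} d_H(X^n,\hat X^n(\bY)) \geq d_0\right)$ and introduce the indicator $E = \mathbf{1}\{\tfrac{1}{n} d_H(X^n,\hat X^n) \geq d_0\}$. By the chain rule,
\[
H(X^n \mid \hat X^n) \;=\; H(E \mid \hat X^n) + H(X^n \mid E, \hat X^n) \;\le\; 1 + H(X^n \mid E, \hat X^n).
\]
On $\{E=0\}$ the vector $X^n$ lies in the Hamming ball of radius $n d_0$ around $\hat X^n$, whose cardinality is bounded by $2^{\,n(h(d_0) + d_0 \log(|\mathcal X|-1))}$ (the standard combinatorial estimate). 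On $\{E=1\}$ we use the trivial bound $H(X^n\mid E=1,\hat X^n)\le n\log|\mathcal X|$. Averaging over $E$ yields
\[
H(X^n \mid \hat X^n) \;\le\; 1 + n\bigl(h(d_0) + d_0 \log(|\mathcal X|-1)\bigr) + P_e \Bigl[n\log|\mathcal X| - n\bigl(h(d_0) + d_0 \log(|\mathcal X|-1)\bigr)\Bigr].
\]

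Next I would convert this into a lower bound on $I(X^n;\hat X^n) = nH(X) - H(X^n\mid\hat X^n)$, invoke $I(X^n;\bY)\ge I(X^n;\hat X^n)$ by the data-processing inequality applied to the Markov chain $X^n\to\bY\to\hat X^n$, and then substitute the closed-form expression $n R_X(d_0) = nH(X) - n\bigl(h(d_0)+d_0\log(|\mathcal X|-1)\bigr)$, which is valid under the stated hypothesis on $d_0$. Solving the resulting linear inequality for $P_e$ produces exactly the claimed bound.

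The main technical obstacle is really a bookkeeping one: ensuring that the scalar Hamming rate--distortion function equals $H(X)-h(d_0)-d_0\log(|\mathcal X|-1)$ precisely when $d_0 \le (|\mathcal X|-1)\min_X P_X$ (otherwise one must use a different, piecewise expression and the constant $1$ in the numerator would not absorb the slack cleanly). The remaining steps --- the Hamming-ball counting bound and the chain-rule split of $H(X^n\mid \hat X^n)$ --- are routine, and the attraction of the formulation is that no AEP/typicality argument is required, yielding a finite-$n$ statement rather than the $o(1)$ form of Lemma \ref{lem:lowerbound}.
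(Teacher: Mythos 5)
Your proposal is correct and follows essentially the same Fano-with-Hamming-ball-counting argument as the paper: same error indicator $E$, same cardinality bound $\binom{n}{nd_0}(|\mathcal{X}|-1)^{nd_0}$ on the $E=0$ branch, same use of the condition $d_0 \le (|\mathcal{X}|-1)\min_X P_X$ to identify $H(X)-h(d_0)-d_0\log(|\mathcal{X}|-1)$ with $R_X(d_0)$. The only cosmetic difference is that you condition on $\hat X^n$ and invoke the data-processing inequality $I(X^n;\hat X^n)\le I(X^n;\bY)$, whereas the paper conditions directly on $\bY$ and writes $H(X^n|\bY)=nH(X)-I(X^n;\bY)$; the two routes yield the identical bound.
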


\begin{proof} See Appendix. \end{proof}

\subsection{Comment on the proof technique}
The proof of lemma \ref{lem:lowerbound} closely follows the proof of
Fano's inequality \cite{csiszar}, where we start with a distortion
error event based on $\frac{1}{n}d(\hat{\bX}(\bY),\bX) \geq d_0$ and
then evaluate conditional entropy of a rate-distortion mapping
conditioned on the error event and the observation $\bY$. To bound
$K(n,d_0)$, we use results in \cite{zegerIT94} for the case of
squared distortion measure.

In relation to the lower bounds presented in this paper for the
probability of reconstruction subject to an average distortion level
one such development was considered in \cite{Yannis88} in the
context of a non-parametric regression type problem.  Let $\theta$
be an element of the metric space $(d,\Theta)$. Then given
$\left\{Y_i,\bG_{i}\right\}_{i=1}^{m}$ for some random or non-random
vectors $\bG_i \in \Real^n$ and $Y_i$ being the responses to these
vectors under $\theta$. Also is given the set of conditional pdfs
given by $p_{\theta(\bG_i)}(Y_i)$ where the notation means that that
the pdfs are parametrized by $\theta(\bG_i)$. The task is to find a
lower bound on the minimax reconstruction distortion under measure
$d$, in reconstruction of $\theta$ given $\bY$ and $\bG$. In our
case one can identify $\bX \triangleq \theta$ and $\Theta \triangleq
{\cal X}^n$ with squared metric $d$. For such a set-up lower bounds
on the asymptotic minimax expected distortion in reconstruction (not
the probability of such an event) was derived in \cite{Yannis88}
using a variation of Fano's bound (see~\cite{Ibragimov81}) under a
suitable choice of worst case quantization for the parameter space
$\Theta = \left\{\mbox{space of q-smooth functions in}\,\,
[0,1]^n\right\}$ meterized with $\ell_r, \, 1 \leq r \leq \infty$
distance.

Our derivation has a flavor of this method in terms of identifying
the right quantization, namely the rate distortion quantization for
a given level of average distortion in a Bayesian setting. Although
we evaluate the lower bounds to the probability of error and not the
expected distortion itself, the lower bound on the expected
distortion in reconstruction follows immediately. Moreover our
method works for any distortion metric $d$, though in this paper we
will restrict ourselves to cases of interest particular to sensor
networks applications.

\section{Constructive upper bound to the probability of error}
\label{sec:ubPe}

In this section we will provide a constructive upper bound to the
probability of error in reconstruction subject to an average squared
distortion level. Unlike the lower bounds in this section we will
provide upper bounds for the particular observation model of
equation~(\ref{eqn:fixedsnr}). This could potentially be generalized
but we will keep our focus on the problem at hand.

To this end, given $\epsilon > 0$ and $n$, assume that we are given
the functional mapping $f(X^n)$ (or $f(\bX)$) that corresponds to
the minimal cover at average distortion level $d_0$ as given by
lemma \ref{lem:mincover}. Upon receiving the observation $\bY$ the
aim is to map it to the index corresponding index $f(\bX)$, i.e. we
want to detect which distortion ball the true signal belongs to.
Clearly if $\bX$ is not typical there is an error. From lemma
\ref{lem:AEP}, the probability of this event can be bounded by an
arbitrary $\delta > 0 $ for a large enough n. So we will not worry
about this a-typical event in the following.
%

Since all the sequences in the typical set are equiprobable, we
covert the problem to a  max-likelihood \emph{detection} set-up over
the set of rate-distortion quantization points given by the minimal
cover as follows. Given $\bG$ we and the rate distortion points
corresponding to the functional mapping $f(X^n)$, we enumerate the
set of points, $\bG Z_{i}^{n} \in \Real^{m}$. Then given the
observation $\bY$ we map $\bY$ to the nearest point (in $\Real^m$)
$\bG Z_{i}^{n}$. Then we ask the following probability,

\[ \begin{array}{l}Pr \left(\sqrt{SNR}\bG f(\bX) \rightarrow \sqrt{SNR} \bG f(\bX') | \bG, \bX \in \B_{i}
, \bX' \in \B_{j}: \frac{1}{n} d_{set}(\B_{i},\B_{j}) \geq 2 d_0
\right)

\end{array}\]

that is, we are asking what is the probability that the in typical
max-likelihood detection set-up we will map signals from distortion
ball $\B_i$ to  signals in distortion ball $\B_j$ that is at an
average set distance $\geq 2 d_0$ from $\B_i$, where
$d_{set}(\B_i,\B_j) = \min_{\bX \in \B_i, \bX' \in \B_j}
d(\bX,\bX')$. For sake of brevity we denote the above probability
via $P_{e}(pair)$ to reflect it as a pairwise error probability.
Since the noise is additive Gaussian noise we have

\[
P_e(pair) =  Pr\left( \bN^T \bG (\bX - \bX')  \geq \frac{1}{2}
\sqrt{SNR}||\bG (\bX - \bX')||^{2} \,\,\, :  \bX \in \B_{i}, \bX'
\in \B_j\right) \]

\[
P_e(pair) =  Pr\left( \bN^T \frac{\bG (\bX - \bX')}{||\bG(\bX -
\bX')||} \geq \frac{\sqrt{SNR}}{2 ||\bG(\bX_1 - \bX_2)|| } ||\bG
(\bX - \bX')||^{2} \,\,\, : \bX \in \B_{i}, \bX' \in \B_j\right)
\]

Since noise $\bN$ is AWGN noise with unit variance in each
dimension, its projection onto the unit vector $\frac{\bG (\bX -
\bX')}{||\bG(\bX - \bX')||}$ is also Gaussian with unit variance.
Thus we have

\[
P_e(pair) =  Pr\left( N  \geq \frac{\sqrt{SNR}}{2} ||\bG (\bX -
\bX')|| \,\,\, : \bX \in \B_{i}, \bX' \in \B_j\right)
\]

By a standard approximation to the ${\cal Q}(.)$ (error) function,
we have that,

\[ P_e\left(f(\bX) \rightarrow f(\bX') | \bX \in \B_i, \bX' \in \B_j, \bG\,\,: \frac{1}{n}
d_{set}(\B_{i},\B_{j}) \geq 2 d_0 \right) \leq \exp \left\{-
\dfrac{SNR ||\bG (\bX - \bX')||^2}{4} \right\} \]

In the worst case we have the following bound,

\[ P_e\left(f(\bX) \rightarrow f(\bX') | \bX \in \B_i, \bX' \in \B_j, \bG\,\,: \frac{1}{n}
d_{set}(\B_{i},\B_{j}) \geq 2 d_0 \right) \leq \exp
\left\{- \min_{\bX \in \B_i,\bX' \in \B_j} \dfrac{SNR ||\bG (\bX -
\bX')||^2}{4} \right\} \]

Now note that from above construction it implies that the average
distortion in reconstruction of $\bX$ is bounded by $2d_0$ if the
distortion metric obeys triangle inequality. To evaluate the total
probability of error we use the union bound to get,

\[ Pr\left(\frac{1}{n}d(\bX,\hat{\bX}(\bY)) \geq 2 d_0\right) \leq \exp
\left\{- \min_{\bX \in \B_i,\bX' \in \B_j} \dfrac{ SNR||\bG (\bX -
\bX')||^2}{4} \right\} 2^{n (R_{X}(d_0) - K(n,d_0))}  \]

We will use this general form and apply it to particular cases of
ensembles of the sensing matrix $\bG$. In the following sections we
begin by providing upper and lower bounds to the sensing capacity
for the Gaussian ensemble for full diversity.

\section{Sensing Capacity: Upper bounds, Gaussian ensemble}
\label{sec:upperbound_scap}
\subsection{Discrete $\bX$, full diversity, Gaussian ensemble}
For this case we have the following main lemma.

\begin{lem}
\label{lem:scap_binom} Given $\bX \in \left\{0,1\right\}^n$ drawn
Bernoulli~$(\alpha,1-\alpha)$ and $\bG$ chosen from the Gaussian
ensemble. Then, with the distortion measure as the hamming
distortion, for a diversity ratio of $\beta = 1$ and for $d_0 \leq
\alpha$,  the sensing capacity $C$ is upper bounded by

\[ C(d_0) \leq \dfrac{\frac{1}{2} \log(1 + \alpha SNR)}{R_{X}(d_0)} \]
\end{lem}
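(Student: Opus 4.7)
The plan is to invoke the discrete-alphabet Fano-type bound of Lemma~\ref{lem:discrete_LB} and control the conditional mutual information $I(\bX;\bY|\bG)$ for the full-diversity Gaussian ensemble. With $|\mathcal{X}|=2$ and a Bernoulli$(\alpha)$ prior, Lemma~\ref{lem:discrete_LB} gives
\[
Pr\!\bigl(\tfrac{1}{n}d_H(\bX,\hat{\bX}(\bY))\ge d_0\bigr) \;\ge\; \frac{nR_{X}(d_0)-I(\bX;\bY|\bG)-1}{n\,(1-h(d_0))},
\]
with $R_{X}(d_0)=H(\alpha)-H(d_0)$ on $d_0\le\alpha$ (using $I(\bX;\bY|\bG)$ rather than $I(\bX;\bY)$ since the decoder knows $\bG$ and $\bX\perp\bG$). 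A necessary condition for the error to vanish is therefore $R_{X}(d_0)\le \tfrac{1}{n}I(\bX;\bY|\bG)+\tfrac{1}{n}$, which, after rearrangement, says $n/m$ cannot exceed $\tfrac{\tfrac12\log(1+\alpha\,SNR)}{R_{X}(d_0)}$ provided I can establish an upper bound $\tfrac{1}{n}I(\bX;\bY|\bG)\le \tfrac{m}{n}\cdot\tfrac{1}{2}\log(1+\alpha\,SNR)$.

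To get this mutual-information bound I would exploit that $Y_i=\sqrt{SNR}\,\bG_i\bX+N_i$ depends only on the $i$th row $\bG_i$ and on $N_i$, and that these are mutually independent across sensors. The chain rule together with ``conditioning reduces entropy'' gives $h(\bY|\bG)\le\sum_i h(Y_i|\bG_i)$, while $h(\bY|\bX,\bG)=\sum_i h(N_i)$, so
\[
I(\bX;\bY|\bG)\;\le\;\sum_{i=1}^m I(\bX;Y_i|\bG_i).
\]
For each sensor I apply the maximum-entropy inequality pointwise in $g_i$, namely $h(Y_i|\bG_i=g_i)\le\tfrac{1}{2}\log(2\pi e\,E[Y_i^2|\bG_i=g_i])$, followed by Jensen's inequality for concave $\log$, which yields $h(Y_i|\bG_i)\le\tfrac{1}{2}\log(2\pi e\,E[Y_i^2])$. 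The second moment is evaluated by using rotational invariance of the normalized Gaussian ensemble, which makes each row $\bG_i$ uniform on the unit sphere in $\mathbb{R}^n$ and hence $E[\bG_i^T\bG_i]=\tfrac{1}{n}I_n$; combined with independence of $\bX$ from $\bG_i$ and with $E[\|\bX\|^2]=n\alpha$ under the Bernoulli prior, this gives $E[(\bG_i\bX)^2]=\tfrac{1}{n}E[\|\bX\|^2]=\alpha$ and therefore $E[Y_i^2]=1+\alpha\,SNR$. Consequently $I(\bX;Y_i|\bG_i)\le\tfrac{1}{2}\log(1+\alpha\,SNR)$ for every $i$.

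Assembling, $\tfrac{1}{n}I(\bX;\bY|\bG)\le\tfrac{m}{n}\cdot\tfrac{1}{2}\log(1+\alpha\,SNR)$; substituting into the necessary condition above and solving for $n/m$ yields the claimed $C(d_0)\le \tfrac{\tfrac12\log(1+\alpha\,SNR)}{R_{X}(d_0)}$. The main obstacle is the per-sensor mutual-information bound: one must be careful to use the maximum-entropy inequality in the form $h(Y)\le\tfrac{1}{2}\log(2\pi e\,E[Y^2])$ (valid because $\mathcal{N}(0,M)$ maximizes entropy subject to $E[Y^2]\le M$) rather than the sharper-looking but mean-subtracted variant, so that the loose bound $\alpha$ (the second moment) rather than $\alpha(1-\alpha)$ (the variance) appears; and one must invoke the rotational invariance of the unit-norm Gaussian rows at precisely the right place to replace $E[\bG_i^T\bG_i]$ by $\tfrac{1}{n}I_n$. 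Everything else --- the chain rule, Jensen's inequality, and the algebraic rearrangement into a bound on $n/m$ --- is routine.
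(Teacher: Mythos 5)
Your proof is correct and follows the paper's skeleton: both you and the paper extract the necessary condition $nR_X(d_0)\le I(\bX;\bY|\bG)+1$ from Lemma~\ref{lem:discrete_LB} (with $d_0\le\alpha$ exactly matching the lemma's condition $d_0\le(|{\cal X}|-1)\min_x P_X$ for $|{\cal X}|=2$), and both then establish $\ex_{\bG} I(\bX;\bY|\bG)\le \frac{m}{2}\log(1+\alpha\,SNR)$. Where you differ is in how that mutual-information bound is obtained. The paper maximizes over input distributions subject to the power constraint $\frac1n\sum_i\ex X_i^2\le\alpha$, reducing to $\frac12\ex_{\bG}\log\det(\mathbf{I}_{m}+\alpha\,SNR\,\bG\bG^T)$, and then argues through the eigenvalues $\lambda_i$ of $\bG\bG^T$; you instead single-letterize across sensors via $I(\bX;\bY|\bG)\le\sum_i I(\bX;Y_i|\bG_i)$, apply the second-moment maximum-entropy bound plus Jensen per sensor, and compute $\ex[(\bG_i\bX)^2]=\alpha$ from the rotational invariance of the unit-norm Gaussian rows. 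The two routes give the same constant, but yours has a small rigor advantage: the paper's assertion that unit-norm rows force $\lambda_i\le 1$ is not literally true (strongly correlated rows can push $\lambda_{\max}$ toward $m$), and its conclusion really rests on $\sum_i\lambda_i=m$ together with concavity of $\log$, whereas your per-sensor Jensen step delivers the bound directly without any spectral claim. One cosmetic remark: using the conditional variance $\alpha(1-\alpha)$ instead of the second moment $\alpha$ would also be legitimate and would give a slightly tighter bound that still implies the stated one, so the caution in your closing paragraph about which maximum-entropy variant to use is not actually load-bearing.
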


\begin{proof}
From lemma \ref{lem:discrete_LB} the probability of error is lower
bounded by zero if the numerator in the lower bound is negative,
this implies for any $m,n$ that

\[ C_{m,n}(d_0, \bG) \leq \dfrac{\frac{1}{m} I(\bX;\bY|\bG)}{R_{X}(d_0)} \]

Since $\bG$ is random we take expectation over $\bG$.
It can be shown that the mutual information

\[\begin{array}{l} \mathbf{E}_{\bG} I(X^n;\bY|\bG)
\leq \\ \max_{P_{\bX}: \sum \frac{1}{n} \ex X_{i}^{2} \leq \alpha}
\frac{1}{2} \mathbf{E}_{\bG} \log det(\mathbf{I}_{m\times m} + \bG
\bX \bX^T \bG^T) \end{array} \] $   = \ex_{\lambda_1,..,\lambda_m}
\sum_{i =1}^{m} \frac{1}{2}\log( 1 + \lambda_i \alpha SNR)$ where
$\lambda_i$ are singular values of $\bG\bG^T$. Since rows of $\bG$
have a unit norm $\Rightarrow \lambda_i \leq 1\,\, \forall i$. Hence
$\ex_{\bG} I(X^n;\bY|\bG) \leq \frac{m}{2} \log(1 + \alpha SNR)$.
Thus the result follows.
\end{proof}

\begin{figure}[t]
\begin{centering}
\includegraphics[width = 3 in]{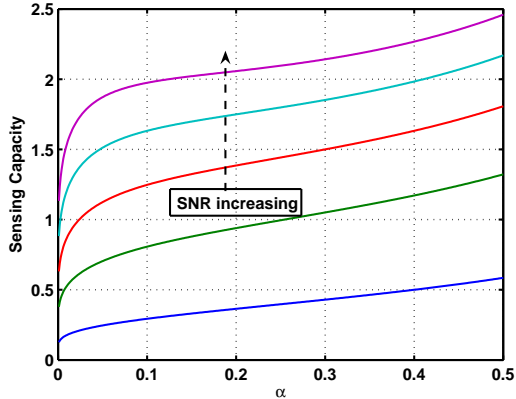}
\caption{\small{The plot of sparsity versus upper bounds to the
sensing capacity for various SNRs  for the binary case
($\mathcal{X}=\{0,1\}$) for zero Hamming distortion.}}
\label{fig:scapacity}
\end{centering}
\end{figure}

\subsection{Continuous $\bX$, full diversity, Gaussian ensemble}
\begin{lem}
\label{lem:scap_continuous} Given $\bX \in \Real^n$ drawn i.i.d.
according to $ P_{X} = \alpha {\cal N}(0,1) + (1 - \alpha) {\cal
N}(0,0) $ and $\bG$ chosen from the Gaussian ensemble. Then, for
squared distortion measure, for diversity ratio $\beta =1$ and for
$d_0 \leq \frac{\alpha}{2}$, the sensing capacity $C(d_0)$ obeys,

\[C(d_0) \leq \dfrac{\frac{1}{2}\log(1 + \alpha SNR)}{H(\alpha) + \frac{\alpha}{2}\log\frac{\alpha}{2d_0}} \]
\end{lem}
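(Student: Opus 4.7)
The plan is to mirror the proof of Lemma \ref{lem:scap_binom} with the two ingredients that change between the discrete and the continuous case: the evaluation of the scalar rate--distortion function $R_X(d_0)$ for the Bernoulli--Gaussian mixture under squared distortion, and the asymptotic form of the neighbor term $K(d_0,n)$ that appears in the general lower bound on the probability of error from Lemma \ref{lem:lowerbound}.

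First, I would apply Lemma \ref{lem:lowerbound}: reliable reconstruction at average squared distortion $d_0$ forces the numerator of the lower bound to be non-positive, i.e. $R_X(d_0) \leq K(d_0,n) + \frac{1}{n}I(\bX;\bY|\bG) + o(1)$. Taking expectation over the random $\bG$, the per-coordinate second moment is $\ex X_i^2 = \alpha\cdot 1 + (1-\alpha)\cdot 0 = \alpha$, so exactly the same Gaussian-channel calculation as in Lemma \ref{lem:scap_binom} carries over: maximizing over input laws with $\frac{1}{n}\sum \ex X_i^2 \leq \alpha$ and using concavity of $\log\det$ together with the unit-row-norm constraint $\mathrm{tr}(\bG\bG^T) = m$ yields $\ex_{\bG} I(\bX;\bY|\bG) \leq \frac{m}{2}\log(1+\alpha\, SNR)$.

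Next, compute $R_X(d_0)$ for the mixture by writing $X = S G$ with $S \sim \mathrm{Bern}(\alpha)$ independent of $G \sim \mathcal{N}(0,1)$. Sending $S$ losslessly costs $H(\alpha)$ per symbol; on the event $\{S=1\}$, quantize $G$ at the Gaussian rate $\frac{1}{2}\log(1/D')$ with distortion $D'$, and on $\{S=0\}$ set $\hat X = 0$. The total expected distortion is $\alpha D'$, and matching $\alpha D' = d_0$ gives $D' = d_0/\alpha$, which under the hypothesis $d_0 \leq \alpha/2$ lies in $(0, 1/2]$ and hence in the valid range of the Gaussian rate--distortion function. Independence of $S$ and $G$ makes this separate-coding scheme optimal, giving $R_X(d_0) = H(\alpha) + \frac{\alpha}{2}\log(\alpha/d_0)$.

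Finally, I would account for $K(d_0,n)$ by invoking the high-rate vector-quantization results of \cite{zegerIT94}. Under squared distortion, the per-dimension log-count of neighbors of a cell in an optimal rate--distortion quantizer contributes asymptotically $\frac{\alpha}{2}\log 2$: only the $\approx \alpha n$ coordinates on which $X$ is non-zero have non-degenerate cells, and each contributes the $\frac{1}{2}\log 2$ neighbor term characteristic of the Gaussian case. Subtracting this from $R_X(d_0)$ merges the logarithms to give an effective denominator $H(\alpha) + \frac{\alpha}{2}\log(\alpha/(2d_0))$, and combining with the mutual-information bound yields $C(d_0) = n/m \leq \frac{\frac{1}{2}\log(1+\alpha\, SNR)}{H(\alpha) + \frac{\alpha}{2}\log(\alpha/(2d_0))}$. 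The main obstacle is this last step: pinning down the correct asymptotic constant for $K(d_0,n)$ for a mixture source with a singular point mass at zero, since the rate--distortion computation itself is standard and the mutual-information bound is exactly the one used in Lemma \ref{lem:scap_binom}. The entire factor of $2$ inside the logarithm ultimately comes from this neighbor term, so getting its dependence on $\alpha$ right (through the fraction of active coordinates) is where the argument is most delicate.
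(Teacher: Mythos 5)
Your overall route is the paper's: reuse the bound $\ex_{\bG} I(\bX;\bY|\bG) \le \frac{m}{2}\log(1+\alpha\,SNR)$ from Lemma \ref{lem:scap_binom}, impose the necessary condition $R_X(d_0)-K(d_0,n) \le \frac{1}{n} I(\bX;\bY|\bG)$ from Lemma \ref{lem:lowerbound}, and divide. The one genuine gap is your rate--distortion step. You justify $R_X(d_0) = H(\alpha)+\frac{\alpha}{2}\log(\alpha/d_0)$ by exhibiting a separate support-plus-values code and asserting that independence of $S$ and $G$ makes it optimal. Independence gives no such thing: an optimal code is not forced to describe the support losslessly, so your construction only proves $R_X(d_0)\le H(\alpha)+\frac{\alpha}{2}\log(\alpha/d_0)$. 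In this converse the rate--distortion function sits in the denominator of an upper bound on $C(d_0)$, so an achievability (upper) bound on $R_X(d_0)$ is the wrong direction --- you need the exact value, or a lower bound on $R_X(d_0)$, for the conclusion to remain an upper bound on sensing capacity. The paper sidesteps this by quoting the rate--distortion function of the Gaussian mixture from \cite{zamirIT02} (section \ref{subsec:ratedist_gauss}, in the limit $\sigma_0^2\to 0$), which yields the same expression; you would need to invoke such a converse result rather than your coding argument.

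On the factor of $2$ inside the logarithm: you assign it entirely to $K(d_0,n)$, arguing $K \to \frac{\alpha}{2}\log 2$ because only the roughly $\alpha n$ active coordinates contribute neighbor terms. The paper instead asserts $|K(d_0,n)-\log 2|<\epsilon$ via \cite{zegerIT94} and then ``plugs in'' the mixture rate--distortion function; note that its stated denominator is exactly $R_X(2d_0)$, and the hypothesis $d_0\le\alpha/2$ is precisely $2d_0\le\alpha$, the validity range of that formula. So your accounting differs from the paper's (and, to be fair, the paper's own bookkeeping is loose here: subtracting $\log 2$ from $R_X(d_0)$ does not literally produce $H(\alpha)+\frac{\alpha}{2}\log\frac{\alpha}{2d_0}$). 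Your value $\frac{\alpha}{2}\log 2$ is the one that makes the algebra close, but, as you yourself admit, it is asserted rather than proved; as written, neither the neighbor-count heuristic nor the independence-optimality claim is established, and only the former is flagged as a difficulty in your proposal.
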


\begin{proof}
From lemma \ref{lem:scap_binom} we have that $\ex_{\bG}
I(\bX;\bY|\bG) \leq \frac{m}{2} \log(1 + \alpha SNR)$. In order that
the probability of error be lower bounded by zero, from lemma
\ref{lem:lowerbound} it follows that asymptotically

\[ \frac{n}{m} \leq \frac{\ex_{\bG} I(\bX;\bY|\bG)}{R_{X}(d_0) - K(d_0,n)}\]
It can be shown that $ |K(d_0,n) - \log 2| < \epsilon$ with
$\epsilon$ very small for large enough $n$, see e.g.
\cite{zegerIT94}. The lemma then follows by plugging in the results
from section \ref{subsec:ratedist_gauss}.
\end{proof}

It can be easily seen that as $\alpha \downarrow 0$ the sensing
capacity goes to zero. We illustrate this by plotting the upper
bounds in figure \ref{fig:scapacity} for the discrete case. We will
revisit this phenomena in section \ref{sec:compare} in relation to
the bounds derived in \cite{Nowak2}  in the context of compressed
sensing.

\section{Sensing Capacity: Lower bounds, Gaussian ensemble}
\label{sec:lowerbound_scap}
\subsection{Discrete alphabet, full diversity}
The discrete $\bX$ with hamming distortion is a special case where
we can provide tighter upper bounds. The proof follows from the
development in section \ref{sec:ubPe} and identifying that for the
discrete case one can choose the discrete set of points instead of
the distortion balls. We have the following lemma.

\begin{lem}
\label{lem:lb_scap_discrete} Given $\bX \in {\cal X}^n$ with $|{\cal
X}| < \infty$, for $\beta =1$ and $\bG$ chosen from a Gaussian
ensemble. Then for $d_0 \leq \min_{x \in{\cal X}} P_X(x)$, a sensing
capacity of
\[ C(d_0) = \dfrac{\frac{1}{2} \log (1 + \frac{SNR d_0}{2})}{H(X)- d_0 \log
|{\cal X} -1| - d_0 \log \frac{1}{d_0}} \]

is achievable in that the probability of error goes down to zero
exponentially for choices of $C = \frac{n}{m} = C(d_0) - \eta$ for
any $\eta > 0$.
\end{lem}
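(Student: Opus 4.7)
The plan is to specialize the constructive achievability argument of Section \ref{sec:ubPe}, as the comment following the statement hints, by replacing the continuous rate-distortion cover with the discrete sequences of $\mathcal{X}^n$ themselves. For Hamming distortion and $d_0 \leq \min_x P_X(x)$ the Shannon lower bound $R_X(d_0) = H(X) - h(d_0) - d_0\log(|\mathcal{X}|-1)$ is tight, and approximating $h(d_0)$ by $d_0\log(1/d_0)$ for small $d_0$ reproduces the denominator in the lemma; this also removes the $K(n,d_0)$ near-neighbor correction because no quantization of a continuous space is required.

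First I would set up maximum-likelihood decoding of $\bY$ against sequences $\bX' \in \mathcal{X}^n$ and note that the event $\{n^{-1} d_H(\hat{\bX},\bX)\geq d_0\}$ is contained in the event of decoding to some $\bX'$ with $d_H(\bX,\bX') \geq n d_0$. Reusing the pairwise analysis of Section \ref{sec:ubPe}, conditional on $\bG$,
\[ P_e(\bX \to \bX' \mid \bG) \leq \exp\!\left(-\tfrac{SNR}{4}\,\|\bG(\bX-\bX')\|^2\right), \]
and averaging over the Gaussian ensemble with unit-norm rows (each row uniform on the sphere, so $\|\bG v\|^2$ is approximately $(\|v\|^2/n)\,\chi_m^2$) yields the moment generating function bound
\[ \ex_{\bG}\exp\!\left(-\tfrac{SNR}{4}\|\bG v\|^2\right) \leq \left(1 + \tfrac{SNR\,\|v\|^2}{2n}\right)^{-m/2}. \]
On the error event $\|\bX-\bX'\|^2 \geq d_H(\bX,\bX') \geq n d_0$ (for the binary alphabet; a constant factor absorbing the minimum squared-distance appears otherwise), so each pairwise probability is at most $(1 + SNR\,d_0/2)^{-m/2}$.

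Next I would apply a union bound over decoding competitors restricted to the typical set. Using type-class enumeration together with Shannon-lower-bound tightness, the number of sequences at Hamming distance $\geq n d_0$ from a given typical $\bX$ that can cause an undetected error is at most $2^{n[H(X)-h(d_0)-d_0\log(|\mathcal{X}|-1)] + o(n)}$, and the small-$d_0$ approximation $h(d_0) \approx d_0\log(1/d_0)$ matches the denominator of $C(d_0)$. Combining the pairwise and union bounds,
\[ P_e \leq \exp\!\left\{\,n\,\ln 2\,\bigl[H(X) - d_0\log(|\mathcal{X}|-1) - d_0\log\tfrac{1}{d_0}\bigr] - \tfrac{m}{2}\log\!\bigl(1+\tfrac{SNR\,d_0}{2}\bigr)\right\}, \]
which decays exponentially whenever $n/m \leq C(d_0) - \eta$ for any $\eta>0$.

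The hard part will be the simultaneous handling of two approximations: the MGF calculation for $\|\bG v\|^2$ under row-normalized (rather than raw i.i.d.) Gaussian rows, which requires either concentration of the normalizing $\chi_n^2$ factor or a direct uniform-sphere computation; and the type-enumeration step that turns the Shannon lower bound into the stated denominator, where the hypothesis $d_0 \leq \min_x P_X(x)$ is essential so that the backward channel realizing $R_X(d_0)$ is well-defined. The binary-entropy approximation and the $o(n)$ terms from typicality must both be absorbed into the $\eta$ slack to preserve the exponential decay.
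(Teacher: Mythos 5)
Your proposal follows essentially the same route as the paper's own proof: a max-likelihood decoding argument with the pairwise Gaussian tail bound $\exp\{-\tfrac{SNR}{4}\|\bG(\bX-\bX')\|^2\}$, averaging over the ensemble via the $\chi^2$ moment generating function to get the $(1+\tfrac{SNR\,d_0}{2})^{-m/2}$ factor, and a union bound over typical competitors counted at the rate-distortion level, with $\log\binom{n}{nd_0}\geq n d_0\log\tfrac{1}{d_0}$ producing the stated denominator. The only differences are cosmetic (you flag the row-normalization and minimum-separation issues more explicitly than the paper does), so the argument is correct and matches the paper's proof.
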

\begin{proof}
We have

\[ Pr\left(\frac{1}{n}d(\bX,\hat{\bX}(\bY)) \geq d_0 | \bG \right )   \leq \exp
\left\{- \dfrac{SNR ||\bG (\bX - \bX')||^2}{4} \right\} 2^{n H(X) -
n d_0 \log |{\cal X} -1| - \log \binom{n}{n d_0}}
\]

where we have applied the union bound to all the \emph{typical}
sequences that are outside the hamming distortion ball of radius
$d_0$. Taking the expectation with respect to $\bG$ we get,

\[ Pr\left(\frac{1}{n}d(\bX,\hat{\bX}(\bY)) \geq d_0 \right )   \leq \ex_{\bG} \exp
\left\{- \dfrac{SNR ||\bG (\bX - \bX')||^2}{4} \right\} 2^{n H(X) -
n d_0 \log |{\cal X} -1| - \log \binom{n}{n d_0}}
\]

Now note that since $\bG$ is a Gaussian random matrix where each row
has a unit $\ell_2$ norm, $||\bG (\bX - \bX')||^2 = \sum_{i=1}^{m}
|\sum_{j=1}^{n} \bG_{ij}(X_i - X'_j)|^2$ is a sum of $m$ independent
$\chi^2$ random variables with mean $||\bX - \bX'||^2$. Thus from
the moment generating function of the $\chi^2$ random variable we
get that,

\[ Pr\left(\frac{1}{n}d(\bX,\hat{\bX}(\bY)) \geq d_0 \right )   \leq \left(\dfrac{1}{1 + \frac{SNR ||\bX - \bX'||^2}{2\,n}}\right)^{m/2} 2^{n H(X) -
n d_0 \log |{\cal X} -1| - \log \binom{n}{n d_0}}
\]

This implies,

\[ Pr\left(\frac{1}{n}d(\bX,\hat{\bX}(\bY)) \geq d_0 \right )   \leq 2^{- \frac{m}{2} \log (1 + \frac{SNR d_0}{2})} 2^{n H(X) -
n d_0 \log |{\cal X} -1| - \log \binom{n}{n d_0}}
\]

Now note that for $d_0 \leq \alpha$, $\log \binom{n}{n d_0} \geq n
d_0 \log \frac{1}{d_0}$. Then from above one can see that the
probability of error goes down to zero if,

\[ \frac{n}{m} < \dfrac{\frac{1}{2} \log (1 + \frac{SNR d_0}{2})}{H(X)- d_0 \log
|{\cal X} -1| - d_0 \log \frac{1}{d_0}} \]

Thus a sensing capacity of

\[ C(d_0) = \dfrac{\frac{1}{2} \log (1 + \frac{SNR d_0}{2})}{H(X)- d_0 \log
|{\cal X} -1| - d_0 \log \frac{1}{d_0}} \]

is achievable in that the probability of error goes down to zero
exponentially for choices of $C = \frac{n}{m} = C(d_0) - \eta$ for
any $\eta > 0$.

\end{proof}
\subsection{Continuous $\bX$, full diversity}
\begin{lem}
\label{lem:lb_scap_continuous} [Weak Achievability] For $\bX \in
\Real^n$ and drawn i.i.d. according to $P_{x}(X)$, $\bG$ chosen from
the Gaussian ensemble and $\beta =1$, a sensing capacity of

\[ C(2 d_0)  = \dfrac{\frac{1}{2} \log(1 + d_0 SNR)}{R_{X}(d_0) -
K(n,d_0)} \]

is achievable in that the probability of error goes down to zero
exponentially with $n$ for $C =\frac{n}{m} \leq C(2 d_0) - \epsilon$
for some arbitrary $\epsilon > 0$.
\end{lem}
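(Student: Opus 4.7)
The plan is to mimic the discrete achievability argument of Lemma \ref{lem:lb_scap_discrete} but replace the enumeration over typical binary sequences with the rate-distortion minimal cover at level $d_0$ that was used in the derivation of the general constructive bound in section \ref{sec:ubPe}. Given $\epsilon>0$, for $n$ large enough the probability that $\bX$ fails to lie in the typical set can be made smaller than $\epsilon$ by the AEP (this will contribute the $o(1)$ term), so I will condition on $\bX$ being typical and focus on the max-likelihood detector that maps $\bY$ to the nearest of the $2^{n(R_X(d_0)-K(n,d_0))}$ images $\sqrt{SNR}\,\bG Z_i^n$ of the cover representatives.

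Next I would run the pairwise error analysis of section \ref{sec:ubPe} verbatim: for any two distortion balls $\B_i,\B_j$ with $\tfrac{1}{n}d_{set}(\B_i,\B_j)\ge 2d_0$, the Gaussianity of $\bN$ gives
\[
P_e(\text{pair}\mid \bG)\le \exp\!\left\{-\min_{\bX\in\B_i,\bX'\in\B_j}\frac{SNR\,\|\bG(\bX-\bX')\|^2}{4}\right\}.
\]
Now take expectation with respect to $\bG$. Because the rows of the Gaussian ensemble are independent and each $\bG_{ij}\sim\mathcal{N}(0,1/n)$, $\|\bG(\bX-\bX')\|^2$ is a sum of $m$ i.i.d.\ scaled $\chi^2_1$ variables with mean $\|\bX-\bX'\|^2$, and the $\chi^2$ moment generating function yields
\[
\ex_{\bG}\exp\!\left\{-\frac{SNR\,\|\bG(\bX-\bX')\|^2}{4}\right\}=\left(\frac{1}{1+\tfrac{SNR\,\|\bX-\bX'\|^2}{2n}}\right)^{m/2}.
\]
For squared distortion, the set-distance condition $\tfrac{1}{n}d_{set}(\B_i,\B_j)\ge 2d_0$ is exactly $\|\bX-\bX'\|^2\ge 2nd_0$ for all admissible pairs, so the expectation is bounded by $2^{-(m/2)\log(1+SNR\,d_0)}$. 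Combining this with the union bound over at most $2^{n(R_X(d_0)-K(n,d_0))}$ cover points gives
\[
Pr\!\left(\tfrac{1}{n}d(\bX,\hat\bX(\bY))\ge 2d_0\right)\le 2^{\,n(R_X(d_0)-K(n,d_0))\,-\,\tfrac{m}{2}\log(1+SNR\,d_0)}+o(1),
\]
which decays exponentially exactly when $\tfrac{n}{m}<\tfrac{\tfrac{1}{2}\log(1+SNR\,d_0)}{R_X(d_0)-K(n,d_0)}$, i.e.\ $C\le C(2d_0)-\epsilon$. Since $K(n,d_0)$ is a bounded constant by the results cited in \cite{zegerIT94}, the rate expression is well-defined.

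The main technical obstacle is justifying the passage from the worst-case $\min_{\bX,\bX'}$ inside the pairwise bound to the averaged bound over $\bG$: strictly speaking the minimum over $\bX\in\B_i,\bX'\in\B_j$ and the expectation over $\bG$ do not commute, so I will need to argue that in the typical regime the relevant minimizers are close to the centers of the distortion balls and that $\|\bX-\bX'\|^2\ge 2nd_0$ holds (up to vanishing slack) for all pairs contributing non-negligibly to the error. A secondary subtlety is the row normalization of $\bG$: the derivation above uses the i.i.d.\ Gaussian ensemble with $\ex\|\bG_i\|^2=1$, and I will invoke concentration of $\|\bG_i\|^2$ around $1$ so that the unit-norm constraint costs only an $o(1)$ factor inside the exponent. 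Everything else is routine union-bounding and collecting the typicality error into the $o(1)$ term of the final statement.
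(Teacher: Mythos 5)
Your proposal is correct and follows essentially the same route as the paper: the constructive max-likelihood detector over the minimal rate-distortion cover of section \ref{sec:ubPe}, the pairwise Gaussian tail bound, expectation over $\bG$ via the $\chi^2$ moment generating function with $\|\bX-\bX'\|^2 \geq 2nd_0$, and a union bound over the $2^{n(R_X(d_0)-K(n,d_0))}$ cover points, giving exponential decay exactly when $\frac{n}{m} < \frac{\frac{1}{2}\log(1+SNR\,d_0)}{R_X(d_0)-K(n,d_0)}$. The interchange of the minimum over ball pairs with the expectation over $\bG$ that you flag as an obstacle is glossed in the paper as well, and is resolved as you suggest by union-bounding over pairs before averaging each pairwise term.
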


\begin{proof} For this case we invoke the construction as outlined in section
\ref{sec:ubPe}. From the results in that section we get that,

\[ Pr\left(\frac{1}{n}d(\bX,\hat{\bX}(\bY)) \geq  2 d_0\right) \leq \exp
\left\{- \min_{\bX \in \B_i,\bX' \in \B_j} \dfrac{SNR ||\bG (\bX -
\bX')||^2}{4} \right\} 2^{n (R_{X}(d_0) - K(n,d_0))}  \]

Note that the result is little weaker in that guarantees are only
provided to reconstruction within $d_0$, but one can appropriately
modify the rate distortion codebook to get the desired average
distortion level. Proceeding as in the case of discrete $\bX$ and ,
by taking the expectation over $\bG$ and noting that $\min_{\bX \in
\B_i,\bX' \in \B_j} ||\bX - \bX'||^2 \geq 2nd_0$, we get that,

\[ Pr\left(\frac{1}{n}d(\bX,\hat{\bX}(\bY)) \geq 2 d_0\right) \leq
\left(\dfrac{1}{1 +  SNR d_0}\right)^{m/2} 2^{n (R_{X}(d_0) -
K(n,d_0))}
\]

This implies,
\[ Pr\left(\frac{1}{n}d(\bX,\hat{\bX}(\bY)) \geq 2d_0\right) \leq
\left(\dfrac{1}{1 +  SNR d_0}\right)^{m/2} 2^{n (R_{X}(d_0) -
K(n,d_0))}
\]

\[ Pr\left(\frac{1}{n}d(\bX,\hat{\bX}(\bY)) \geq 2d_0\right) \leq
2^{- \frac{m}{2}\log(1 +  SNR d_0) } 2^{n(R_{X}(d_0) - K(n,d_0))}
\]

This implies that for

\[ \frac{n}{m} < \dfrac{\frac{1}{2} \log(1 + d_0 SNR)}{R_{X}(d_0) - K(n,d_0)}
\]
the probability of error goes to zero exponentially. This means that
a sensing capacity of

\[ C(2 d_0)  = \dfrac{\frac{1}{2} \log(1 + d_0 SNR)}{R_{X}(d_0) -
K(n,d_0)} \]

is achievable in that the probability of error goes down to zero
exponentially with $n$ for $C =\frac{n}{m} \leq C(d_0) - \eta$ for
some arbitrary $\eta > 0$.

\end{proof}

A plot of upper and lower bounds are shown in figure \ref{fig:ublb_scap}.

\begin{figure}[t]
\centering \makebox[0in]{
    \begin{tabular}{cc}
      \psfig{figure= 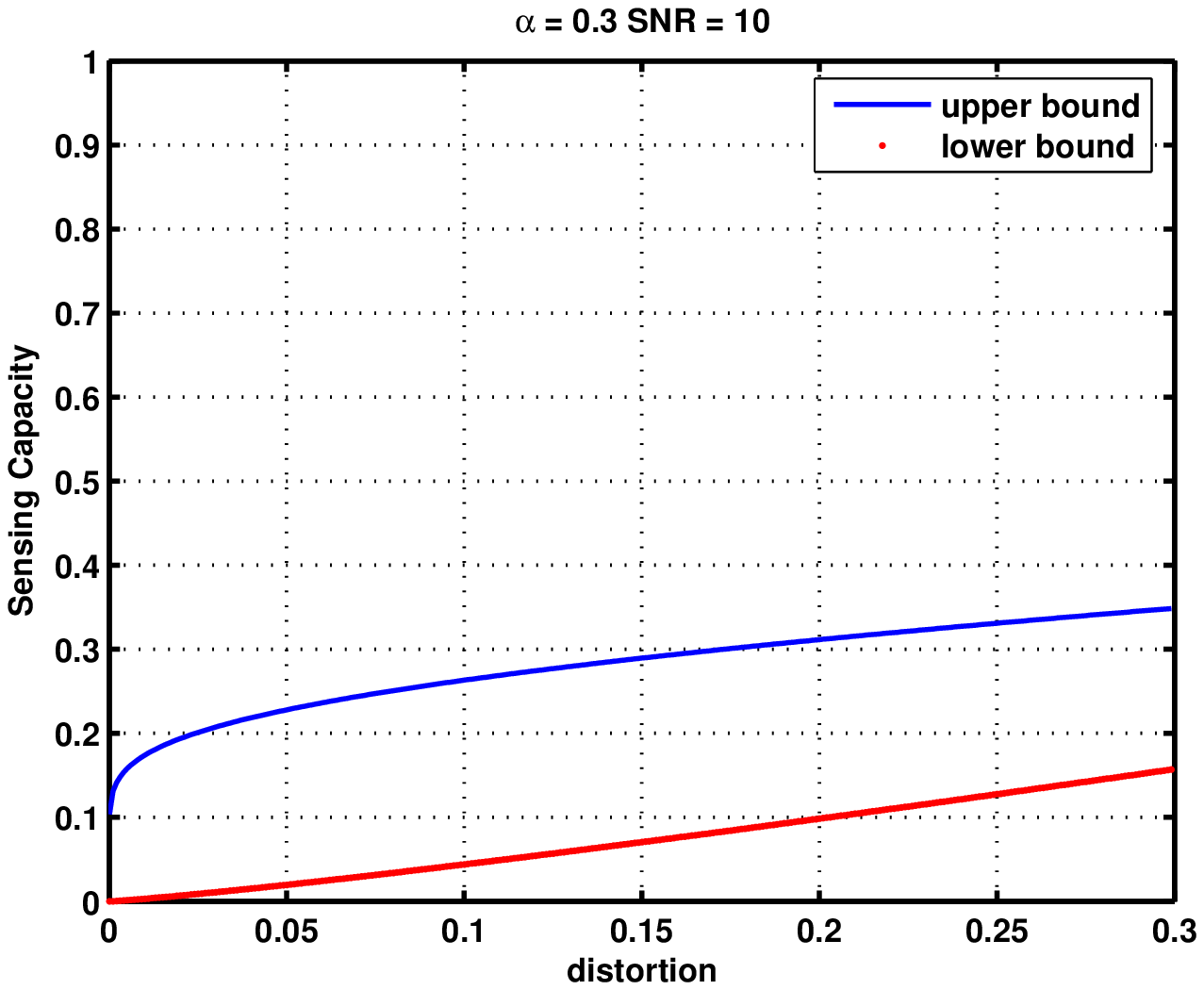,height= 2.3 in} & \psfig{figure= 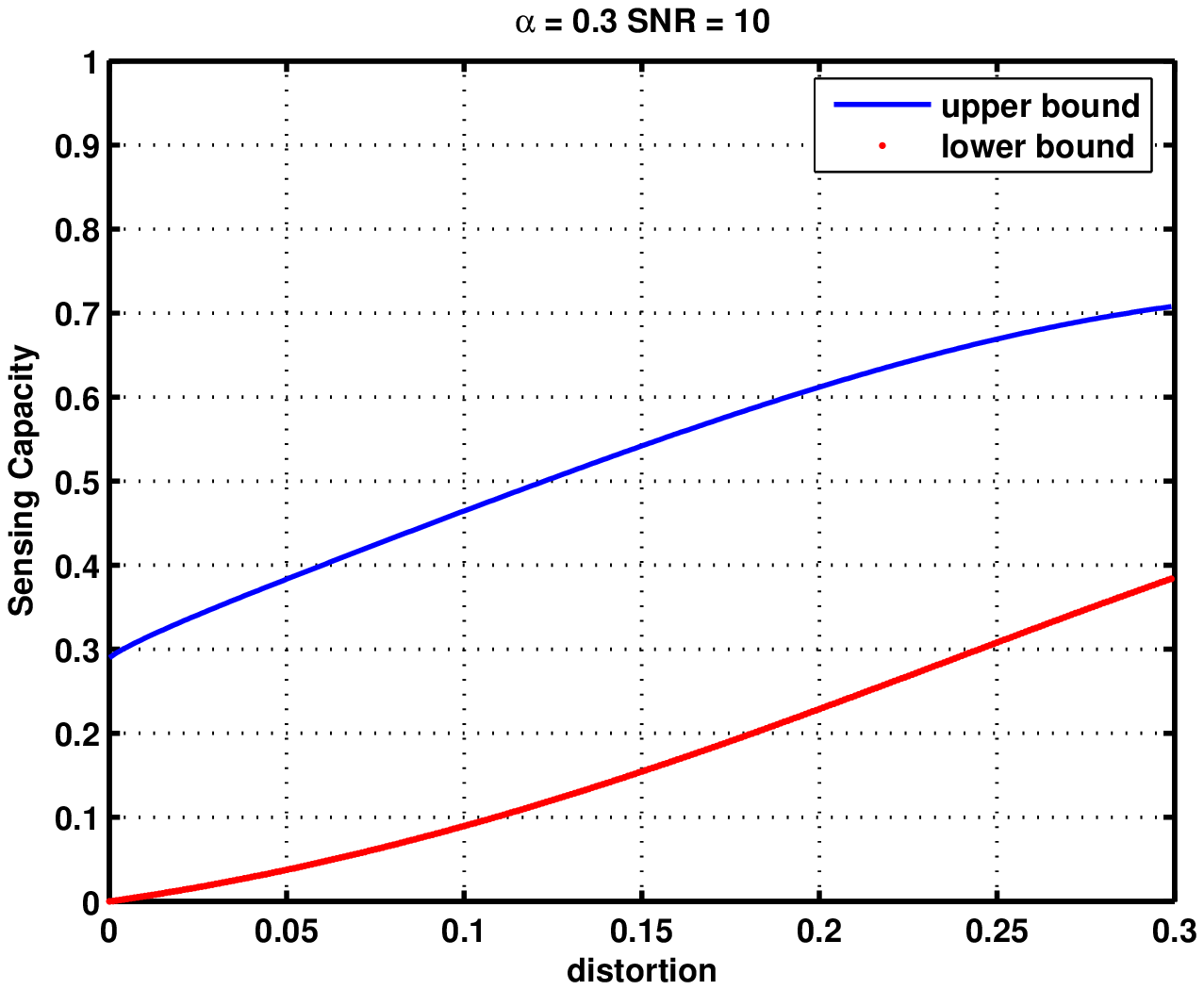,height= 2.3 in}\\
      (a) & (b)
    \end{tabular}
  }
  \caption{(a) Plots of upper and lower bounds to sensing capacity for the Gaussian mixture model. (b) Plots of upper and lower bounds for sensing capacity for the Bernoulli model. The distortion on the x-axis is mean squared distortion for the Gaussian case and hamming distortion for the Bernoulli case. Note that zero distortion achievable sensing capacity is zero and there is an SNR gap in the upper and lower bounds. }
  \label{fig:ublb_scap}
\end{figure}

\section{Comparison with existing bounds}
\label{sec:compare} Note that the results in this paper are stated
for $d_0 \leq \alpha$ for the discrete case and for $d_0 \leq
\frac{\alpha}{2}$ for the continuous case. This is because one must
consider stricter average distortion measures as the phenomena
becomes sparser. To bring out this point concretely and for purposes
of comparison with existing bounds, we consider the result obtained
in \cite{Nowak2} based on optimal complexity regularized estimation
framework. They show that the expected mean squared error in
reconstruction is upper bounded by,

\begin{eqnarray} \label{eq.nowak}
\ex \left[\frac{1}{n} ||\bX - \hat{\bX}||^2\right] \leq C_1C_2 \frac{k \log n}{m}
\end{eqnarray}

where $C_1 \sim 1$ and $C_2 \sim 50( P + \sigma)^2 \left\{(1 + p)\log 2 + 4\right\}$,
under normalization of the signal and the noise power and
$p$ is the number of quantization levels, \cite{Nowak1}. To this end consider an extremely sparse
case, i.e., $k=1$. Then the average distortion metric in equation \ref{eq.nowak},
 does not adequately capture the performance, as one can
always declare all zeros to be the estimated vector and the distortion
then is upper bounded by ${\cal O}(\frac{1}{n})$.
Consider the case when $\bX$ is extremely sparse, i.e.  $\alpha \downarrow 0$ as $\frac{1}{n}$.
Then a right comparison is to evaluate the average distortion per number of non-zero elements,
$\ex \left[\frac{1}{\alpha n} ||\bX - \hat{\bX}||^2\right]$. Using this as the performance metric we have from equation \ref{eq.nowak},

\begin{eqnarray} \label{eq.nowak1}
\ex \left[\frac{1}{\alpha n} ||\bX - \hat{\bX}||^2\right] \leq C_1C_2 \frac{n \log n}{m}
\end{eqnarray}
When $\alpha$ is small then the average number of projections
required such that the per non-zero element distortion is bounded by
a constant, scales as ${\cal O}(n \log n)$. This is indeed
consistent with our results, in that the Sensing Capacity goes down
to zero as $\frac{1}{\log n}$.\\

\noindent $\bX$ is sparse, i.e. $\alpha <1 $ but not very small.
From results on achievable sensing capacity we have that

\[\begin{array}{ll} Pr\left(\frac{1}{n} ||\bX - \hat{\bX}||^2  \geq d_0\right) \leq
-\frac{m}{2} \log(1 + d_0 SNR/2) + n (R_{X}(d_0) - K(n,d_0))
\end{array}\]

In order to compare the results we fix, performance guarantee  of
$Pr(d(\bX,\hat{\bX})  \geq d_0) \leq \epsilon $ for a given
$\epsilon >0$, we have for the minimal number of projections
required that,

\[m \geq \frac{2 \left(\log (1/\epsilon) + n (R_X(d_0)- K(n,d_0))\right)}{\log(1 + d_0 SNR/2)}
\]

from our results. From results in \cite{Nowak1} it follows that,

\[ m \geq C_1 C_2 \frac{\alpha n \log n}{d_0 \epsilon} \]

For the special case of binary alphabet we have the following
scaling orders for the number of projections in both cases, from
achievable sensing capacity we have $m_1 \geq {\cal O}(n
H_2(\alpha))$ and from results in \cite{Nowak1}  we have $m_2 \geq
{\cal O}(\alpha n \log n)$. A plot of these orders as a function of
$\alpha$ for a fixed $n$ is shown in figure, \ref{fig:nowak}.

\begin{figure}[t]
\begin{centering}
\includegraphics[width = 3.5 in]{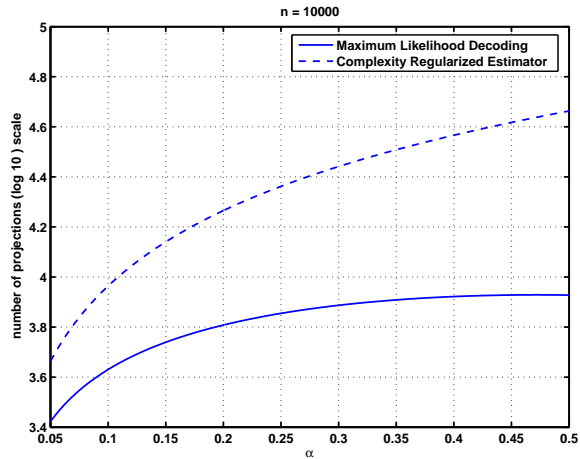}
\caption{\small{The difference in scaling of the number  of
projections with the sparsity rate from bounds derived from Sensing
Capacity and from bounds obtained in \cite{Nowak1}. Our bounds are
sharper. }} \label{fig:nowak}
\end{centering}
\end{figure}

\section{Effect of structure of $\bG$}
\label{sec:structure}

In this section we will show that effect of structure of $\bG$ on
sensing capacity. This section is divided into several subsections
and the discussion is self-contained. In section
\ref{subsec:diversity_gauss} we will show that for the Gaussian
ensemble, the sensing capacity reduces for when diversity is low.
Following that in section \ref{subsec:correlation} we will show the
effect of correlation across columns in the sensing matrix for the
Gaussian ensemble on achievable sensing capacity. In section
\ref{subsec:deterministic} we will present a general result for a
\emph{generic} sensing matrix $\bG$ which will subsequently be used
to highlight the effect of structures such as that induced via
random filtering using a FIR filter with/without downsampling as
considered in \cite{Wakin06}.

\subsection{Effect of sensing diversity, Gaussian ensemble}
\label{subsec:diversity_gauss} In order to show the effect of
sensing diversity we evaluate the mutual information $\ex_{\bG}
I(\bX;\bY|\bG)$ using the intuition described in the introduction.
To this end we have the following lemma.

\begin{lem}
\label{lem:bound_info2} For a diversity ratio of $\beta$, with $l =
\beta n$ as the average diversity per sensor and an average sparsity
level of $k = \alpha n$ , we have
\begin{eqnarray}\label{{ap2}}
\ex_{\bG}
I(\bX;\bY|\bG)\leq\frac{m}{2}\mathbf{E}_j\left[\log\left(\frac{SNR}{l}j+1\right)\right],
\end{eqnarray}
where the expectation is evaluated over the distribution
\[\Pr(j)=\dfrac{\binom{k}{j}\binom{n-k}{l-j}}{\binom{n}{l}}\]
\end{lem}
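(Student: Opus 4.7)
The plan is to reduce the vector mutual information to a per-sensor quantity and then bound each per-sensor term via a Gaussian maximum-entropy argument conditioned on the overlap between the row's support and the support of the signal. First, since the noise components $N_i$ are independent and each $Y_i$ depends only on the $i$-th row $\bG_i$, subadditivity of differential entropy yields $h(\bY \mid \bG) \leq \sum_i h(Y_i \mid \bG_i)$ while $h(\bY \mid \bX, \bG) = \sum_i h(N_i)$. Subtracting and invoking the identically distributed rows gives
\[
\ex_{\bG} I(\bX;\bY\mid\bG) \;\leq\; \sum_{i=1}^m \ex_{\bG_i} I(\bX;Y_i\mid\bG_i) \;=\; m\cdot\ex_{\bG_1} I(\bX;Y_1\mid\bG_1),
\]
so it suffices to produce a per-sensor bound of $\tfrac{1}{2}\ex_j[\log(1+SNR\,j/l)]$.

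For a single sensor, I would condition on the overlap $j_1 = |S_1 \cap \{k : X_k=1\}|$, where $S_1 = \mathrm{supp}(\bG_1)$ has cardinality $l = \beta n$. Under the sparsity model (with $\bX$ carrying $k=\alpha n$ active coordinates), $j_1$ follows exactly the hypergeometric law in the statement. The key computation is the conditional variance of the useful signal: given $\bG_1 = g$ and $j_1 = j$, the subvector $(X_k)_{k \in S_1}$ is uniform on length-$l$ binary strings with exactly $j$ ones, so
\[
\mathrm{Var}(X_k\mid j)=\tfrac{j(l-j)}{l^2},\qquad \mathrm{Cov}(X_k,X_{k'}\mid j)=-\tfrac{j(l-j)}{l^2(l-1)}\quad(k\neq k').
\]
Expanding $\mathrm{Var}\bigl(\sum_{k\in S_1} g_k X_k \mid g,j\bigr)$ and using $\|g\|^2 = 1$ together with the spherical-symmetry identity $\ex_g[(\sum_{k\in S_1}g_k)^2] = 1$, which follows from the fact that the normalized Gaussian row is uniform on $S^{l-1}$, a direct computation yields $\ex_{g}[\mathrm{Var}(\bG_1\bX\mid g,j)]=j(l-j)/l^2 \leq j/l$. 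Applying the Gaussian maximum-entropy upper bound on $h(Y_1 \mid \bG_1, j_1)$ and then invoking concavity of $\log$ (Jensen) gives $\ex_{\bG_1} I(\bX;Y_1\mid\bG_1,j_1) \leq \tfrac{1}{2}\ex_{j_1}[\log(1+SNR\,j_1/l)]$.

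The main obstacle is passing from $I(\bX;Y_1\mid\bG_1,j_1)$ back to $I(\bX;Y_1\mid\bG_1)$. The chain rule (using that $j_1$ is a deterministic function of $(\bX,\bG_1)$) gives $I(\bX;Y_1\mid\bG_1) = I(\bX;Y_1\mid\bG_1,j_1) + I(j_1;Y_1\mid\bG_1)$, and the residual $I(j_1;Y_1\mid\bG_1) \leq H(j_1)$ is at most $O(\log l)$, a sub-leading term that is absorbed into the asymptotic $\Theta(m)$-scale bound stated in the lemma. A secondary subtlety is that the unit-norm constraint forces the entries of each row of $\bG$ to be only jointly spherically uniform rather than independent, so the key variance identity must be argued via spherical symmetry rather than naive coordinate-wise independence of Gaussians.
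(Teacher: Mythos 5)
Your per-sensor reduction and the hypergeometric conditional-variance computation are fine as far as they go, but the step that passes from $I(\bX;Y_1\mid \bG_1, j_1)$ back to $I(\bX;Y_1\mid \bG_1)$ is where the proof fails. Since $j_1$ is a deterministic function of $(\bX,\bG_1)$, the chain rule gives $I(\bX;Y_1\mid\bG_1)=I(j_1;Y_1\mid\bG_1)+I(\bX;Y_1\mid\bG_1,j_1)$, so conditioning on the overlap produces a \emph{lower} bound on the quantity you must upper bound, and the nonnegative residual $I(j_1;Y_1\mid\bG_1)$ must be genuinely controlled. Your estimate $I(j_1;Y_1\mid\bG_1)\le H(j_1)=O(\log l)$ is not sub-leading: the right-hand side of the lemma is $\Theta(m)$ --- per sensor it is essentially $\tfrac12\log(1+\alpha\,SNR)$, a constant, since $j/l$ concentrates near $\alpha$ --- whereas $m\,H(j_1)=\Theta(m\log n)$ dominates it. Even a sharper bound such as $I(j_1;Y_1\mid\bG_1)\le h(Y_1\mid\bG_1)-h(N_1)\le \tfrac12\log(1+\alpha\,SNR)$ leaves a residual of exactly the same order as the main term. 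Because the lemma asserts a clean inequality with no additive slack, the argument as written does not establish it; the residual cannot simply be ``absorbed.''

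For comparison, the paper never goes through $h(Y_i\mid\bG_i)$: it expands $I(\bX;\bY\mid\bG)=I(\bX;\bY)+I(\bX;\bG\mid\bY)=h(\bY)-h(\bY\mid\bX)+h(\bG\mid\bY)-h(\bG\mid\bX,\bY)$ and evaluates the term $h(\mathbf{g}_1\mid\bY_1,\bX)$ exactly from the Gaussian conditional (MMSE) law; there the overlap $j$ is determined by the conditioning variables themselves (the support of $\bX$ together with the support of the row), and the signal-dependent variance $\tfrac{P}{l}\sum_i X_i^2$ cancels between $h(\bY\mid\bX)$ and $h(\bG\mid\bX,\bY)$, leaving precisely the $\tfrac{m}{2}\,\mathbf{E}_j\log(1+\tfrac{j\,SNR}{l})$ expression averaged over the hypergeometric law. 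Two secondary points about your write-up: the covariance identity $\mathrm{Var}(X_k\mid j)=j(l-j)/l^2$ with negative cross terms is specific to the Bernoulli signal model, while the lemma is also used for the Gaussian-mixture signal (there the conditional signal variance is $\sum_k g_k^2$ over the overlap, which is simpler and yields the same $j/l$ after averaging); and the spherical-symmetry identity should be stated for the $l$-dimensional nonzero sub-row of $\bG_1$, since the row is sparse. Neither of these is fatal; the unresolved residual $I(j_1;Y_1\mid\bG_1)$ is the genuine gap.
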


\begin{proof} See Appendix. \end{proof}

In the above lemma $j$ plays the role of number of overlaps between
the projection vector and the sparse signal. As the diversity
reduces this overlap reduces and the mutual information decreases.
We will illustrate this by considering the extreme case when $\beta
\downarrow$ with $n$ as $\frac{1}{n}$.  For this case we have,
\[\begin{array}{l}
I(\bX;\bY|\bG)\\
\leq \frac{m}{2}\mathbf{E}_j\left[\log\left(\frac{j\,\,SNR}{l}+1\right)\right]\\
=\frac{m}{2}[(1-\alpha)\log(SNR \cdot 0+1)+\alpha
\log(SNR+1)]\\
= \frac{m\alpha}{2}\log(1+ SNR) \end{array}
\]

\begin{figure}[t]
\begin{centering}
\includegraphics[width = 4 in]{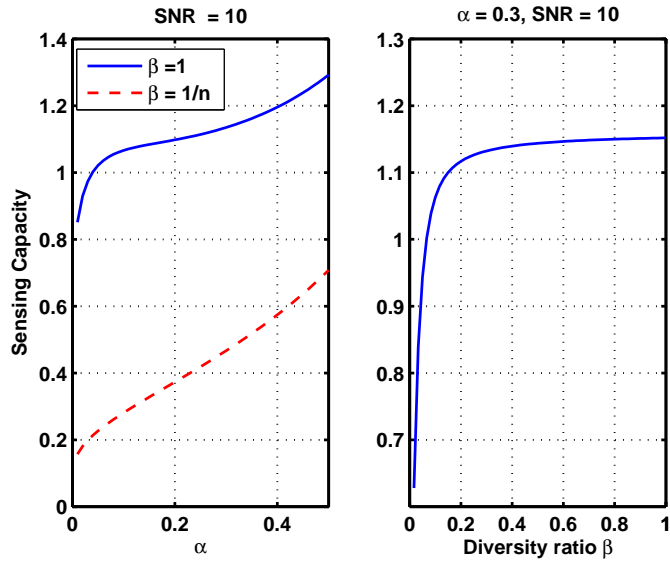}
\caption{\small{The gap between upper bounds to sensing capacity in
very low diversity and full diversity for the binary alphabet case.
Shown also is the Sensing Capacity as a function of diversity for
fixed sparsity. Note the saturation effect with diversity ratio.}}
\label{fig:diversity}
\end{centering}
\end{figure}

The effect is illustrated in figure \ref{fig:diversity}. Thus low
sensing diversity implies low sensing capacity.

\subsection{Effect of correlation in $\bG$ on achievable sensing
capacity} \label{subsec:correlation}

In this section we will show that correlation in sensing matrix
$\bG$ reduces achievable capacity. Correlation in $\bG$ can arise
due to many physical reasons such as correlated scattering,
correlation of gains across modalities in sensing which may arise
due to the physical construction of the sensor. Naturally there can
be direct relations between various phenomena that can lead to such
correlation. This is captured by assuming that there is correlation
across the columns of $\bG$. Consider the upper bound to the
probability of error as derived in section \ref{sec:ubPe},

\[ Pr\left(\frac{1}{n}d(\bX,\hat{\bX}(\bY)) \geq 2 d_0\right) \leq \exp
\left\{- \min_{\bX \in \B_i,\bX' \in \B_j} \dfrac{ SNR||\bG (\bX -
\bX')||^2}{4} \right\} 2^{n (R_{X}(d_0) - K(n,d_0))}  \]

In the above expression, the term
\[SNR||\bG (\bX - \bX')||^2 = SNR
\sum_{i=1}^{n}| \sum_{j=1}^{n} \bG_{ij} (X_i - X'_j)|^2 \]

where $\sum_{j=1}^{n} \bG_{ij} (X_i - X'_j)$ for each $i$ are
independent Gaussian random variables with zero mean and variance
given by- $\Delta^{T} \Sigma_{\bG_i} \Delta$ where $\Delta$ is the
vector $\Delta = \bX - \bX'$ and $\Sigma_{G_i}$ is the covariance
matrix (symmetric and positive semi-definite) of the $i$-th row of
$\bG$. By construction, we know that $\frac{1}{n} \Delta^{T} \Delta
\geq 2 d_0$ and note that in the worst case,

\[\min \,\, \Delta^T \tilde \Sigma_{\bG_i} \Delta  = \lambda_{\min} \Delta^T \Delta\]

where $\lambda_{\min}$ is the minimum eigenvalue of the normalized
covariance matrix $\tilde \Sigma_{\bG_{i}}$. Proceeding in a manner
similar to that in the proof of lemma \ref{lem:lb_scap_continuous}
we have that,

\[Pr\left(\frac{1}{n}d(\bX,\hat{\bX}(\bY)) \geq 2 d_0\right) \leq  \left(\dfrac{1}{1 + d_0 SNR \lambda_{\min}}\right)^{m/2} 2^{n (R_{X}(d_0) - K(n,d_0))} \]

From the above expression one can see that achievable sensing
capacity falls in general, since  $\lambda_{min} \leq 1$ as compared
to the case when the elements of $\bG$ are uncorrelated in which
case $\lambda_{\min} = 1 = \lambda_{\max}$.

\subsection{Deterministic $\bG$}
\label{subsec:deterministic}

In this section we will consider deterministic matrices $\bG$ and
provide upper bounds to sensing capacity for the general case. To
this end denote the rows of $\bG$ as $\bG_i,\,i=1,\,2,\,\ldots,\,m$.
Let the cross-correlations of these rows be denoted as:
$$
r_{i} = {\bG_{i}^{T} \bG_{i+1} \over \bG_{i}^{T} \bG_i}
$$
As before to ensure the SNR, to be fixed we impose $\bG_{i}^{T}
\bG_{i} = 1$ for all $i$. Then we have the following result:

\begin{lem}
For the generative models for the signal $\bX$ as outlined in the
problem set-up, an upper bound for the sensing capacity for a
deterministic sensing matrix $\bG \in \Real^{m\times n}$ is given
by:
\begin{equation}\label{lem:determ}
C(d_0) \leq \sum_{i=1}^{m-1} {\log \left (1 + SNR \alpha(1 - r_i)+
{r_i \alpha SNR \over \alpha SNR + 1}(1 + \alpha SNR (1-r_i)) \right
) \over R_X(d_0) - K(n,d_0)}
\end{equation}
\end{lem}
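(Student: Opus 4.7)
The natural starting point is the necessary condition extracted from Lemma \ref{lem:lowerbound}. For any reconstruction algorithm to achieve $\Pr(\frac{1}{n}d(\hat{\bX}(\bY),\bX) \ge d_0) \to 0$, one needs $R_{X}(d_0) - K(n,d_0) \le \frac{1}{n} I(\bX;\bY\mid \bG)$. Since $\bG$ is now deterministic, $I(\bX;\bY \mid \bG) = I(\bX;\bY)$, and the definition of sensing capacity gives $C(d_0) = n/m \le \frac{(1/m)\,I(\bX;\bY)}{R_X(d_0) - K(n,d_0)}$. The whole task therefore reduces to producing an upper bound on $I(\bX;\bY)$ that exposes the pairwise correlations $r_i$ between successive rows of $\bG$.

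I would attack the mutual information through $I(\bX;\bY) = h(\bY) - h(\bN)$, where $h(\bN) = \frac{m}{2}\log(2\pi e)$ by independence of the noise components. To bring in only the \emph{nearest-neighbor} correlation $r_i$ and nothing finer, apply the chain rule together with the fact that conditioning reduces entropy:
\[
 h(\bY) \;=\; \sum_{i=1}^{m} h(Y_i \mid Y_1,\ldots,Y_{i-1}) \;\le\; h(Y_1) + \sum_{i=1}^{m-1} h(Y_{i+1} \mid Y_i).
\]
Each conditional differential entropy is then upper bounded via the maximum-entropy principle: over all joint distributions with prescribed second moments, the Gaussian maximizes differential entropy. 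This gives $h(Y_{i+1}\mid Y_i) \le \frac{1}{2}\log\!\bigl(2\pi e \bigl[\mathrm{Var}(Y_{i+1}) - \mathrm{Cov}(Y_i,Y_{i+1})^2/\mathrm{Var}(Y_i)\bigr]\bigr)$.

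The remaining work is bookkeeping on second moments. Under the i.i.d.\ prior, $\mathbf{E}[X_j^2]=\alpha$ in both the discrete and continuous case, and the components of $\bX$ are uncorrelated, so
\[
\mathrm{Var}(Y_i) \;=\; SNR\,\alpha\,\|\bG_i\|^2 + 1 \;=\; SNR\,\alpha + 1, \qquad \mathrm{Cov}(Y_i,Y_{i+1}) \;=\; SNR\,\alpha\,\bG_i^{T}\bG_{i+1} \;=\; SNR\,\alpha\,r_i,
\]
using the unit-norm constraint $\bG_i^{T}\bG_i = 1$. Subtracting the noise entropy $\frac{1}{2}\log(2\pi e)$ from each term converts the bound on $h(Y_{i+1}\mid Y_i)$ into $\frac{1}{2}\log\!\bigl(\mathrm{Var}(Y_{i+1}) - \mathrm{Cov}^2/\mathrm{Var}(Y_i)\bigr)$. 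A short algebraic manipulation gives
\[
(SNR\,\alpha + 1) - \frac{(SNR\,\alpha\,r_i)^2}{SNR\,\alpha+1} \;=\; \frac{\bigl(1+SNR\,\alpha(1-r_i)\bigr)\bigl(1+SNR\,\alpha(1+r_i)\bigr)}{SNR\,\alpha+1} \;=\; 1 + SNR\,\alpha(1-r_i) + \frac{r_i\,\alpha\,SNR}{\alpha\,SNR+1}\bigl(1 + \alpha\,SNR(1-r_i)\bigr),
\]
which is precisely the expression inside the logarithm in (\ref{lem:determ}). Summing over $i = 1,\ldots,m-1$ and substituting into the capacity bound yields the claimed inequality; the boundary term $h(Y_1)-h(N_1) \le \frac{1}{2}\log(1+\alpha\,SNR)$ is absorbed into the $o(1)$ slack of Lemma \ref{lem:lowerbound}, since its contribution to $C(d_0)=n/m$ is $O(1/m)$.

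The main obstacle I foresee is the maximum-entropy step. For the Bernoulli prior the marginals of $Y_i$ are Gaussian mixtures rather than Gaussians, so the inequality $h(Y_{i+1}\mid Y_i) \le \frac{1}{2}\log(2\pi e \,\sigma^2_{Y_{i+1}\mid Y_i})$ must be invoked carefully — it is valid because Gaussian maximizes differential entropy subject only to a second-moment constraint, but checking that the conditional second moment evaluates to the linear MMSE residual (and not to a larger quantity for a non-Gaussian conditioner) is the one place the argument needs care. Once that is justified the rest is algebra and the chain rule.
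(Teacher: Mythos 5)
Your proposal is correct and follows essentially the same route as the paper: the paper also starts from the necessary condition $R_X(d_0)-K(n,d_0)\le \frac{1}{n}I(\bX;\bY|\bG)$, writes $I(\bX;\bY|\bG)=h(\bY)-h(\bN)$, chains $h(\bY)\le h(Y_1)+\sum_{i=1}^{m-1}h(Y_{i+1}\mid Y_i)$, and bounds each conditional entropy by the linear-MMSE residual variance computed from $\mathrm{Var}(Y_i)=\alpha\,SNR+1$ and $\mathrm{Cov}(Y_i,Y_{i+1})=r_i\,\alpha\,SNR$, yielding exactly the expression inside the logarithm. The only cosmetic difference is that the paper justifies the Gaussian step by replacing $\bX$ with a Gaussian surrogate $\bX^*$ of matching covariance before conditioning, whereas you invoke the max-entropy bound termwise (both are valid, and your worry about the non-Gaussian conditioner is settled by the standard inequality $h(Y_{i+1}\mid Y_i)\le h(Y_{i+1}-\eta_i Y_i)$ for the LMMSE coefficient $\eta_i$).
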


\begin{proof}
We will evaluate $I(\bX;\bY|\bG)$ via the straightforward method,

\[ I(\bX;\bY|\bG) = h(\bY|\bG) - h(\bY|\bG,\bX) \]

Note that $h(\bY|\bG,\bX) = h(\bN)$. Note that $h(\bY|\bG) \leq
h(\bY) \leq h(\bY^*)$ where $\bY^*$ is a Gaussian random vector
obtained via $\bG \bX^*$ where $\bX^*$ is now a Gaussian random
vector with i.i.d components and with the same covariance as $\bX$
under the generative model(s). We will now upper bound the entropy
of $\bY$ via,
\[
h(\bY) \leq h(\bY^*) \leq h(Y_1^*)+\sum_{i=1}^{m-1} h(Y_{i+1}^{*}
\mid Y_i^*) \leq h(Y_1^*) + h(Y_{i+1}^{*} - \eta_i Y_i^*)
\]
where $\eta_i Y_i^*$ is the best MMSE estimate for $Y_{i+1}^*$. The
MMSE estimate of $Y_{i+1}^*$ from $Y_i^*$ is given by,

\[ \hat{Y}_{i+1}^{*} = \frac{\Sigma_{Y_{i}^{*} Y_{i+1}^{*}}}{\Sigma_{Y_{i}^{*}}} Y_{i}^{*}\]

$\Sigma_{Y_{i}^{*}Y_{i+1}^{*}} = r_i \alpha SNR $ and
$\Sigma_{Y_{i}^{*}} =  \alpha SNR + 1 $. The result then follows by
evaluating the MMSE error given by,

\[ \ex (Y_{i+1}^{*} - \hat{Y}_{i+1}^{*})^2 = \ex \left( Y_{i+1}^{*} -
\frac{r_i \alpha SNR }{\alpha SNR + 1} Y_i^*\right)^2 \]

\[ \begin{array}{ll} \ex \left( Y_{i+1}^{*} -
\frac{r_i \alpha SNR }{\alpha SNR + 1} Y_i^* \right)^2  & = \alpha
SNR + 1 + \frac{(r_i \alpha SNR)^2 }{\alpha SNR + 1} - 2 \frac{(r_i
\alpha SNR)^2 }{\alpha SNR + 1}\\
& = 1 + \alpha SNR( 1 - r_i) + \frac{r_i\alpha SNR }{\alpha SNR + 1}
\left( 1 + (1 - r_i) \alpha SNR \right) \end{array} \]

Plugging in the quantities the result follows.

\end{proof}

Let us see the implications of the above result for one particular
type of sensing matrix architecture induced via a random filtering
and downsampling, considered in \cite{Wakin06}. The output of the
filter of length $L < n$ can be modeled via multiplication of $\bX$
via a Toeplitz matrix (with a banded structure). The overlap between
successive rows of the matrix $\bG$ is $L-1$ in this case implying a
large cross correlation $r_i$. From lemma \ref{lem:determ} it
follows that larger cross correlation in rows implies poor sensing
capacity. Also note that for a  filtering architecture one has to
address a coverage issue wherein it is required that $m > n - L +
1$. This implies that $L > n - m + 1$. Thus the filter length has to
be sufficiently large which implies that cross-correlation is also
large.

Indeed randomizing each row will lead to low cross-correlation (in
an expected sense) but the coverage issue still needs to be
addressed. On the other hand one can subsample the output signal of
length $ n - L + 1$ by some factor so as to reduce the cross
correlation yet ensuring coverage. In this case the matrix almost
becomes like a upper triangular matrix and there is a significant
loss of sensing diversity. A loose  tradeoff between the
filter-length $L$ and the sampling factor $d$ (say) immediately
follows from lemma \ref{lem:determ} where the cross correlation
changes according to $r_i = \dfrac{L(1-d)}{n}$

\begin{figure}[t]
\centering \makebox[0in]{
    \begin{tabular}{cc}
      \psfig{figure= 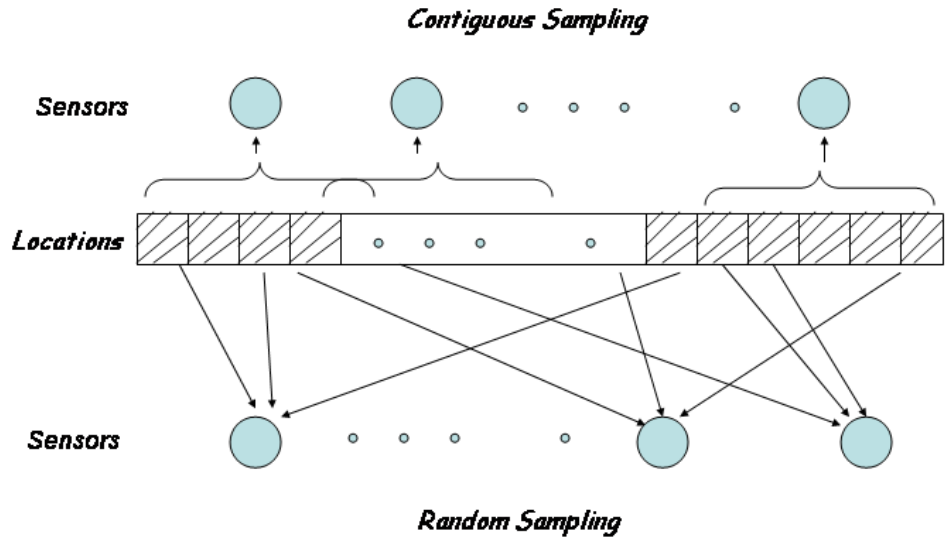,height= 4 in}
    \end{tabular}
  }
  \caption{Illustration of random sampling Vs contiguous sampling in a sensor network. This leads to different structures on the sensing
  matrix and that leads to different performance. }
  \label{fig:sampling}
\end{figure}

\section{Upper bounds on Sensing Capacity for $\left\{0,1\right\}$ ensemble}
\label{sec:ub_01ensemble}

The main motivation for considering this ensemble comes from
scenarios where randomization in the elements of $\bG$ is not
feasible, e.g. field estimation from smoothed data. In this case
each sensor measures a superposition of the signals that are in the
sensing range of the sensor. This leads us to consider other types
of modalities, e.g.  contiguous sampling of $\bX$ by each sensor Vs
random sampling for $\beta < 1$. An illustration of the two types of
sampling is shown in figure \ref{fig:sampling}. We reveal the
following contrast for the two cases for same $\beta < 1$

\begin{lem}
\label{lem:randsamp} Random Sampling: For the $\left\{0,1\right\}$
ensemble for sensing matrices consider the case when each row of
$\bG$ has $\beta n$ ones randomly placed in $n$ positions. Then for
discrete $\bX \in \left\{0,1\right\}^n$ drawn Bernoulli$(\alpha)$
and for $d_0 < \alpha$,

\[ C_{rand}(d_0) \leq \frac{H(J)}{h_2(\alpha) - h_2(d_0)} \]
where $H(.)$ is the discrete entropy function and where $J$ is a
random variable with distribution given by

\[\Pr(J=j)=\dfrac{\binom{\alpha n}{j}\binom{n(1-\alpha)}{\beta n - j}}{\binom{n}{\beta n}}\]

\end{lem}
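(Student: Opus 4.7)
The plan is to combine the discrete-alphabet lower bound on the probability of reconstruction error in Lemma~\ref{lem:discrete_LB} with a sharp per-sensor upper bound on the conditional mutual information that exploits the $\{0,1\}$ structure of both the source and the sensing matrix. Specializing Lemma~\ref{lem:discrete_LB} to $|\mathcal{X}|=2$ with Hamming distortion and recalling that the Bernoulli$(\alpha)$ rate--distortion function is $R_{X}(d_0) = h_2(\alpha) - h_2(d_0)$ for $d_0 \leq \alpha$, requiring the lower bound on $P_e$ to vanish and taking expectation over $\bG$ yields the asymptotic necessary condition
\[
\frac{n}{m} \;\leq\; \frac{\tfrac{1}{m}\mathbf{E}_{\bG}\, I(\bX;\bY|\bG)}{h_2(\alpha) - h_2(d_0)} + o(1),
\]
so the task reduces to showing $\tfrac{1}{m}\mathbf{E}_{\bG}\, I(\bX;\bY|\bG) \leq H(J) + o(1)$, with $J$ the hypergeometric variable in the statement.

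Next, I would apply $H(\bY|\bG,\bX) = m\,H(N_1)$ together with the subadditivity bound $H(\bY|\bG) \leq \sum_{i=1}^{m} H(Y_i|\bG_i)$ (using that conditioning on all of $\bG$ only tightens entropy relative to conditioning on $\bG_i$ alone) to obtain $I(\bX;\bY|\bG) \leq \sum_{i} I(\bX; Y_i|\bG_i)$. For a $\{0,1\}$ row normalized to unit $\ell_2$ norm with support $S_i$ of size $l=\beta n$,
\[
Y_i = \sqrt{SNR/l}\; J_i + N_i, \qquad J_i = \sum_{j\in S_i} X_j,
\]
and since $Y_i$ depends on $\bX$ only through the scalar overlap count $J_i$ given $\bG_i$, data processing gives $I(\bX; Y_i|\bG_i) \leq H(J_i|\bG_i)$.

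The delicate step is identifying the correct distribution of $J_i$. Under the iid Bernoulli prior $J_i | \bG_i$ is Binomial$(l,\alpha)$; however, conditioning on the strongly typical event $T_\varepsilon = \{\bX : |\,|\bX|_1 - \alpha n\,| \leq \varepsilon n\}$ and then on any weight $|\bX|_1 = k$ with $k \in [(\alpha-\varepsilon)n,(\alpha+\varepsilon)n]$ makes $J_i | \bG_i$ Hypergeometric$(n,k,l)$, which converges (in distribution, and hence in entropy) to the Hypergeometric$(n,\alpha n,\beta n)$ variable $J$ of the statement as $\varepsilon\downarrow 0$, $n\to\infty$. Standard AEP gives $\Pr(\bX \notin T_\varepsilon) \leq e^{-cn}$, and the splitting
\[
I(\bX;\bY|\bG) \;\leq\; H(\mathbf{1}_{T_\varepsilon}) + I(\bX;\bY|\bG,\bX\in T_\varepsilon) + \Pr(\bX \notin T_\varepsilon)\, n\log 2
\]
makes the atypical contribution $o(m)$. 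Hence $\tfrac{1}{m}\mathbf{E}_{\bG}\, I(\bX;\bY|\bG) \leq H(J) + o(1)$, and combining with the first paragraph yields the claimed bound.

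The main obstacle is precisely this Binomial-to-hypergeometric reduction: the naive bound under the iid prior would give an entropy larger by $\tfrac{1}{2}\log\tfrac{1}{1-\beta}$ for large $n$, which would loosen the capacity bound correspondingly. The typicality argument must therefore be tight enough to eliminate this gap — controlling the exponential decay of $\Pr(T_\varepsilon^c)$ against the $n\log 2$ per-coordinate entropy prefactor, and bounding the entropy of Hypergeometric$(n,k,l)$ uniformly for $k$ in the typical window — without introducing $o(1)$ slack comparable to $H(J)$ itself. This parallels the reasoning implicit in Lemma~\ref{lem:bound_info2} for the Gaussian ensemble and is where the bulk of the technical work resides.
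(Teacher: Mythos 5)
Your overall plan---the necessary condition from Lemma~\ref{lem:discrete_LB} with $R_X(d_0)=h_2(\alpha)-h_2(d_0)$, then a per-sensor bound on $\frac{1}{m}I(\bX;\bY|\bG)$---matches the paper's high-level strategy, and the reduction $I(\bX;\bY|\bG)\le\sum_i I(\bX;Y_i|\bG_i)\le\sum_i H(J_i|\bG_i)$ is sound. The genuine gap is exactly at the step you yourself flag as delicate: conditioning on the typical event $T_\varepsilon$ does \emph{not} turn $J_i$ given $\bG_i$ into the hypergeometric $J$. Under the i.i.d.\ Bernoulli$(\alpha)$ prior, $J_i|\bG_i$ is Binomial$(\beta n,\alpha)$, and since $\Pr(T_\varepsilon^c)=e^{-cn}$, conditioning on $T_\varepsilon$ perturbs this law only exponentially little: it remains a mixture of hypergeometrics over the residual $\Theta(\sqrt{n})$ weight fluctuations, whose entropy is still the binomial one, exceeding $H(J)$ by roughly $\frac{1}{2}\log\frac{1}{1-\beta}$. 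The hypergeometric law appears only after conditioning on the \emph{exact} weight $K=|\bX|_1$, and if you do that row by row you must carry the extra term $I(K;Y_i|\bG_i)$, which is precisely the $\frac{1}{2}\log\frac{1}{1-\beta}$ per row you are trying to eliminate; summed over rows it is $\Theta(m)$, not $o(m)$. So the typicality patch, as stated, cannot close the gap, and your argument proves only the weaker bound with $H(\mathrm{Binomial}(\beta n,\alpha))$ in the numerator (which coincides with $H(J)$ only to leading order as $n\to\infty$, since both grow like $\frac{1}{2}\log n$), not the stated inequality.

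The repair is to account for the weight information once, globally, rather than per row: for instance $I(\bX;\bY|\bG)\le I(K;\bY|\bG)+I(\bX;\bY|\bG,K)\le\log(n+1)+\sum_i \ex_K H(J_i|\bG_i,K)$, and $\ex_K H(J_i|\bG_i,K)\to H(J)$ by concentration of $K$ around $\alpha n$ and smoothness of the hypergeometric entropy in $K$; the one-time $\log(n+1)$ is negligible. This is in effect what the paper does, via a different decomposition: it first drops the noise, $I(\bX;\bY|\bG)\le I(\bX;\tilde{\bY}|\bG)$ with $\tilde{\bY}=\bG\bX$, and writes $I(\bX;\tilde{\bY}|\bG)=I(\bX;\tilde{\bY})+I(\bX;\bG|\tilde{\bY})$; the first term, which absorbs all the information about $K$ accumulated across sensors, is dismissed as negligible, while the second is bounded row by row through counting placements of the ones consistent with a given overlap, $\log\binom{n}{\beta n}-\ex_j\log\left[\binom{k}{j}\binom{n-k}{\beta n-j}\right]=H(J)$, the point being that the conditional entropy of a row of $\bG$ is computed with all of $\bX$---hence its weight---known. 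Your route can be made to work with the explicit $K$-conditioning fix above, but not with typical-set conditioning alone.
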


\begin{proof}
See Appendix.
\end{proof}

\begin{lem}
\label{lem:contsamp} Contiguous Sampling: For the
$\left\{0,1\right\}$ ensemble for sensing matrices consider the case
where each row of $\bG$ has $\beta n$ \emph{consecutive} ones
randomly placed with wrap around. Then for discrete $\bX \in
\left\{0,1\right\}^n$ drawn Bernoulli$(\alpha)$ and $d_0 < \alpha$,

\[ C_{contg.}(d_0) \leq \frac{h_2(\alpha + \beta)}{h_2(\alpha) - h_2(d_0)} \]

\end{lem}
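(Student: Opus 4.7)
The plan is to parallel the proof of Lemma \ref{lem:randsamp}: apply the discrete Fano-type bound from Lemma \ref{lem:discrete_LB} to the Bernoulli-$\alpha$ signal under Hamming distortion, and then bound the mutual information by exploiting the contiguous window structure of $\bG$. Lemma \ref{lem:discrete_LB} says that for the probability of error to vanish one needs $\tfrac{1}{n}\,I(\bX;\bY\,|\,\bG) \geq h_2(\alpha) - h_2(d_0) - o(1)$, which, rewritten in terms of $C = n/m$, reduces the entire problem to establishing $(1/m)\,\mathbf{E}_{\bG}\,I(\bX;\bY|\bG) \leq h_2(\alpha+\beta)$; combining the two inequalities then yields the claim.

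To bound the mutual information I first collapse the Gaussian noise. Since $\bY = \sqrt{SNR}\,\bG\bX + \bN$ with $\bN$ independent of $(\bG,\bX)$, the Markov chain $\bX \to \bG\bX \to \bY$ (conditional on $\bG$), together with the fact that each row of $\bG$ has exactly $\beta n$ ones, gives
\[
I(\bX;\bY|\bG) \;\leq\; I(\bX;\bG\bX|\bG) \;\leq\; H(\bG\bX\,|\,\bG),
\]
because each coordinate $(\bG\bX)_i = J_i$ is integer-valued in $\{0,1,\ldots,\beta n\}$. This step is formally identical to the random-sampling case; what changes is the subsequent bound on $H(J_1,\ldots,J_m|\bG)$, where random sampling only uses subadditivity to obtain $m\,H(J)$ with hypergeometric $J$.

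For contiguous wrap-around windows the supports of consecutive sensors differ by one position entering and one leaving, so $J_i - J_{i-1} = X_{i+\beta n-1}-X_{i-1}$ and much of $\bX$ is implicitly encoded in the trajectory $(J_1,\ldots,J_m)$. My plan is to expand
\[
H(J_1,\ldots,J_m\,|\,\bG) \;=\; H(J_1|\bG) + \sum_{i=2}^{m} H(J_i\,|\,J_1,\ldots,J_{i-1},\bG),
\]
note that the initial term is $O(\log n)$ and so vanishes upon dividing by $m$, and then bound each incremental term by conditioning on enough of the past to essentially pin down the leaving bit $X_{i-1}$ via the chain of earlier transitions, leaving only the fresh entering bit as new uncertainty. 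A typical-set counting argument over the $\bX$-configurations consistent with a prescribed trajectory should then cap the number of distinct realisations at $2^{m\,h_2(\alpha+\beta)+o(m)}$, with the combined $\alpha+\beta$ reflecting the two incremental sources of freedom per step: either a fresh sparsity bit enters, or the window shift repositions existing mass.

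The main obstacle is establishing this counting bound with the precise constant $h_2(\alpha+\beta)$. A direct chain-rule step only gives $H(J_i|J_{i-1}) \leq H(X_{i+\beta n-1}-X_{i-1})$, i.e.\ the entropy of a $\{-1,0,+1\}$-valued trit, which for small $\alpha$ and moderate $\beta$ is strictly larger than $h_2(\alpha+\beta)$. To sharpen to $h_2(\alpha+\beta)$ one must use the global structure of the trajectory and the fact that the partial sequence $(J_1,\ldots,J_{i-1})$ already determines many past $X_l$'s along each arithmetic-progression class modulo $\beta n$, and must correctly handle both the wrap-around boundary and the atypical $\bX$'s of asymptotically negligible probability; this is where the technical core of the proof lies.
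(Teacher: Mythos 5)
There is a genuine gap, and it begins with a misreading of the ensemble. In the lemma's model each row of $\bG$ independently carries a block of $\beta n$ consecutive ones whose starting position is random (with wrap-around); the rows are \emph{not} successive shifts of a single sliding window indexed by the sensor number. Your key identity $J_i - J_{i-1} = X_{i+\beta n -1} - X_{i-1}$, and with it the whole trajectory/counting argument, presupposes that deterministic shift-by-one structure (a Toeplitz-type matrix of the kind the paper treats separately in the deterministic-$\bG$ section), so it does not apply to the statement being proved. Even setting that aside, you yourself concede that the step that would produce the constant $h_2(\alpha+\beta)$ --- the typical-set count of $\bX$-configurations consistent with a trajectory --- is not established: the bound you can actually prove per step is the entropy of a $\{-1,0,+1\}$-valued increment, which you note can exceed $h_2(\alpha+\beta)$. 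Since, as you correctly observe, the entire lemma reduces to showing $\frac{1}{m} I(\bX;\bY|\bG) \le h_2(\alpha+\beta) + o(1)$, the proposal leaves exactly the essential inequality unproven. A further warning sign is your intermediate reduction $I(\bX;\bG\bX|\bG)\le H(\bG\bX|\bG)$: each coordinate of $\bG\bX$ is essentially a Binomial$(\beta n,\alpha)$ count with entropy of order $\log n$, so without a delicate use of dependence across rows (which the actual ensemble does not provide, since the window positions are independent) this route cannot yield a constant per-sensor bound.

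For contrast, the paper's proof is row-wise and short, and hinges on a different decomposition: using independence of $\bG$ and $\bX$ it writes, with $\tilde{\bY}=\bG\bX$, $I(\bX;\tilde{\bY}|\bG) = I(\bX;\tilde{\bY}) + I(\bX;\bG|\tilde{\bY})$, argues the first term is negligible per dimension, and bounds the second term one row at a time: the position of the contiguous block in a row has entropy $\log n$, while conditioned on $(\bX,\tilde{Y}_1)$ the residual uncertainty is about $\log(n-k-l)$ when $\tilde{Y}_1=0$ (no overlap with the support of $\bX$) and about $\log(k+l)$ when $\tilde{Y}_1>0$, so the information gained per row is at most the binary entropy of the overlap indicator, which tends to $h_2(\alpha+\beta)$; combining with the necessary condition from Lemma \ref{lem:discrete_LB}, where $R_X(d_0)=h_2(\alpha)-h_2(d_0)$ for $d_0<\alpha$, gives the claimed bound. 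To repair your argument you would need to abandon the sliding-window trajectory and instead quantify, per row, how much the observation can reveal about the random window location, which is where the $h_2(\alpha+\beta)$ actually comes from.
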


\begin{proof}
See Appendix.
\end{proof}

As seen the upper bound, $C_{rand}(d_0) \geq C_{contg.}(d_0)$. Thus
randomization in $\bG$ performs better.  The difference is shown in
figure \ref{fig:ub_randvscont} for a low sparsity scenario. The
proofs of the lemmas \ref{lem:randsamp} and \ref{lem:contsamp}
follow from the upper bounds to the mutual information terms as
provided in section \ref{sec:bound_info_01} and then applying the
necessary conditions for the lower bound on the probability of error
to be lower bounded by zero.

\begin{figure}[t]
\centering \makebox[0in]{
    \begin{tabular}{cc}
      \psfig{figure= 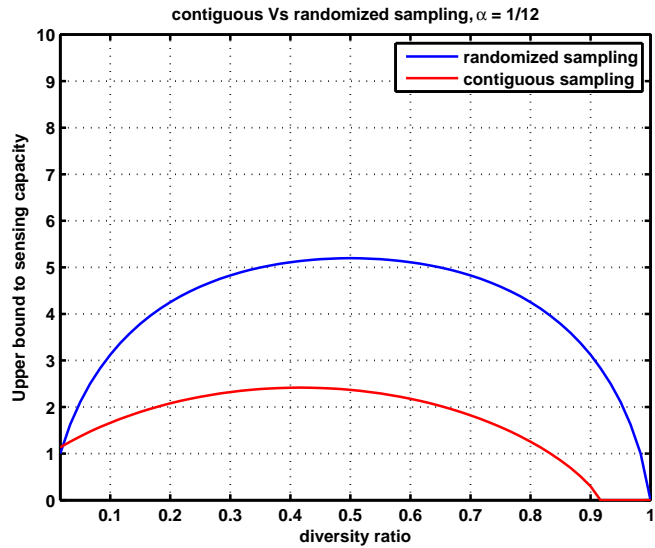,height= 3 in}
    \end{tabular}
  }
  \caption{A comparison of the upper bounds to sensing capacity for the randomized sampling Vs contiguous sampling case.
  $\bX$ is the Bernoulli model and the ensemble for $\bG$ is the $\left\{0,1\right\}$ ensemble. We have selected the case of
  low sparsity in this case. Note that due to loose overbounding  of mutual information (we basically got rid of noise)
  the upper bounds are greater than in the case of Gaussian ensemble. }
  \label{fig:ub_randvscont}
\end{figure}

\section{Estimation of functions of $\bX$}
\label{sec:func_est}

The analysis of lower bounds to the probability of error presented
in this paper extend in a straightforward way to estimation of
functions of $\bX$. In this section we will consider one such
scenario that has received attention in  relation to problems
arising in physics. The discussion below will reveal the power of
the method presented in this work and it is easily capable of
handling more complicated cases and scenarios, though the
computation of the terms involved in the analysis may become hard.

\subsection{Detecting the sign pattern of $\bX$}
Of particular interest is to estimate the sign pattern of the
underlying signal $\bX$. To this end define a new random variable
$\bU$, via

\[ U_i = \left\{ \begin{array}{ll}
                1 & \mbox{if} \,\, X_i > 0 \\
                -1 & \mbox{if} \,\, X_i < 0 \\
                0  & \mbox{if} \,\, X_i = 0
                \end{array} \right. \]

The corresponding $n$ dimensional extension and probability
distribution on $\bU$ is induced directly via $P_{\bX}$. In such a
case note that $\bU \rightarrow \bX \rightarrow \bY \rightarrow
\hat{\bU}(\bY)$ forms a Markov chain. To this end consider an error
event defined via,

\[ E =
\left\{\begin{array}{ll}
 1  & \mbox{if} \,\, \bU \neq \hat{\bU}(\bY) \\
 0  & \mbox{otherwise}
  \end{array}\right. \]

Then we have,

\[ \begin{array}{lll}
H(\bU,E| \bY) & = & \underset{\leq 1}{\underbrace{H(E|\bY)}} + H(\bU|E,\bY) \\
             & = & H(\bU|\bY) + \underset{=0}{\underbrace{H(E|\bU,\bY)}}
\end{array} \]

Thus we have

\[ H(\bU|\bY) \leq 1 + P_e \underset{\leq n \log 3}{\underbrace{H(\bU|E =1, \bY)}} + \underset{=0}{\underbrace{(1 - P_e) H(\bU| E =
0,\bY)}} \] This implies,

\[ P_e \geq \frac{H(\bU) - I(\bU;\bY|\bG) -1}{n\log 3} \]

In order to evaluate the $I(\bU;\bY|\bG)$ we note that
$I(\bU,\bX;\bY|\bG) = I(\bX;\bY|\bG)$. This follows from ,
$I(\bU,\bX;\bY|\bG) = H(\bU,\bX) - H(\bX,\bU|\bY,\bG) = H(\bX) -
H(\bX|\bG,\bY) - H(\bU|\bG,\bY,\bX) = I(\bX;\bY|\bG)$. Thus
$I(\bU;\bY|\bG) = I(\bX;\bY|\bG) - I(\bX;\bY|\bG,\bU)$ and both
these terms can be adequately bounded/evaluated.

\section{Appendix}
\subsection{Proof of lemma \ref{lem:lowerbound}}
Let $X^{n} = \left\{X_1,...,X_n\right\}$ be an i.i.d. sequence where
each variable $X_i$ is distributed according to a distribution
$P_{X}$ defined on the alphabet ${\cal X}$. Denote $P_{X^n}
\triangleq (P_{X})^n$ the n-dimensional distribution induced by
$P_{X}$. Let the space ${\cal X}^{n}$ be equipped with a distance
measure $d(.,.)$ with the distance in $n$ dimensions given by
$d_n(X^n,Z^{n}) = \sum_{k=1}^{n} d(X_k,Z_k)$ for $X^n, Z^n \in {\cal
X}^n$. Given $\epsilon >0$, there exist a set of points
$\left\{Z_{1}^{n},...,Z_{N_{\epsilon}(n,d_0)}\right\} \subset {\cal
X}^n$ such that,

\begin{eqnarray}
\label{eq:cover} P_{X^n}\left(\bigcup_{i=1}^{N_{\epsilon}(n,d_0)}
{\cal B}_{i}\right) \geq 1 - \epsilon
\end{eqnarray}

where ${\cal B}_{i} \triangleq \left\{ X^n : \frac{1}{n}
d_n(X^n,Z_{i}^{n}) \leq d_0 \right\}$, i.e.,  the $d_0$ balls around
the set of points \emph{cover} the space ${\cal X}^n$ in probability
exceeding $1 - \epsilon$.

Given such set of points there exists a function $f(X^n): X^n
\rightarrow Z_{i}^{n}\,\, s.t. \,\, Pr\left(\frac{1}{n}
d_n(X^n,Z_{i}^{n}) \leq d_0\right) \geq 1 - \epsilon$. To this end,
let $ T_{P_{{X}^{n}}}$ denote the set of $\delta$ - typical
sequences in ${\cal X}^n$ that are typical $P_{X^n}$, i.e.

\[ T_{P_{{X}^{n}}} = \left\{ X^n : | -\frac{1}{n} \log \hat{P}(X^n) - H(X)| \leq \delta \right\} \]

where $\hat{P}(X^n)$ is the empirical distribution induced by the
sequence $X^n$. We have the following lemma from \cite{cover}.

\begin{lem}
\label{lem:AEP} For any $\eta > 0$ there exists an $n_0$ such that
for all $n \geq n_0$, such that

\[ Pr\left( X^n : |-\frac{1}{n} \log \hat{P}(X^n) - H(X)| < \delta \right) > 1 -
\eta \]
\end{lem}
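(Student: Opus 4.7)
The statement is precisely the weak form of the asymptotic equipartition property (AEP), and my plan is to derive it directly from the weak law of large numbers (WLLN) applied to the log-likelihoods of $X_1,\dots,X_n$. The parameter $\delta > 0$ is fixed (it is the one that defines the typical set in the enclosing paragraph), so the claim reduces to saying that $-\frac{1}{n}\log \hat{P}(X^n) \to H(X)$ in probability as $n \to \infty$.

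First I would unpack the quantity $-\tfrac{1}{n}\log \hat{P}(X^n)$. Interpreting $\hat{P}(X^n) = \prod_{i=1}^n P_X(X_i)$ (the empirical likelihood of the i.i.d. sequence under its own marginal), one writes
\[
-\frac{1}{n}\log \hat{P}(X^n) \;=\; \frac{1}{n}\sum_{i=1}^n \bigl(-\log P_X(X_i)\bigr) \;=\; \frac{1}{n}\sum_{i=1}^n Y_i,
\]
where the $Y_i := -\log P_X(X_i)$ are themselves i.i.d.\ random variables with $\mathbf{E}[Y_i] = -\mathbf{E}[\log P_X(X_1)] = H(X)$. Thus the inner event in the lemma is exactly an event on the sample mean of the $Y_i$.

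The main step is then an application of the WLLN. Given a fixed $\delta > 0$ and any target $\eta > 0$, one can either invoke the general Khinchine WLLN (needing only $H(X) < \infty$ for integrability of $Y_1$) to produce an $n_0$ such that
\[
\Pr\!\left(\left|\tfrac{1}{n}\sum_{i=1}^n Y_i - H(X)\right| \geq \delta\right) \leq \eta \qquad \text{for all } n \geq n_0,
\]
or, under the stronger assumption that $\mathrm{Var}(Y_1) < \infty$, apply Chebyshev's inequality to obtain the explicit threshold $n_0 \geq \mathrm{Var}(Y_1)/(\delta^2 \eta)$. Complementing this event gives precisely $\Pr(|-\tfrac{1}{n}\log \hat{P}(X^n) - H(X)| < \delta) > 1 - \eta$.

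The only real obstacle is the integrability of $Y_1 = -\log P_X(X_1)$, i.e.\ the requirement that $H(X)$ be finite. For the two signal models used in this paper this is benign: in the discrete Bernoulli$(\alpha)$ case $H(X) = h_2(\alpha)$ is clearly finite, and in the continuous Gaussian-mixture case the entropy (suitably handled at the atom at zero, or in the sense of the mixed discrete/continuous measure that governs sparsity) is likewise finite. Consequently no additional technical hypothesis beyond what the paper already assumes is needed, and the lemma follows immediately from WLLN.
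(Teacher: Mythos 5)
The paper gives no proof of this lemma at all---it is quoted directly from \cite{cover}---and your WLLN/Chebyshev argument on the i.i.d.\ variables $-\log P_X(X_i)$ is precisely the textbook proof behind that citation, so your proposal is correct and takes essentially the same route as the source the paper relies on. The only caution is one of interpretation rather than a gap in your argument: the paper's wording defines $\hat{P}(X^n)$ as the \emph{empirical} distribution (under which $-\tfrac{1}{n}\log\hat{P}(X^n)$ is the empirical entropy, handled instead by the LLN on empirical frequencies plus continuity of entropy for finite alphabets), and the continuous sparse model has an atom at zero so ``$H(X)$'' needs the mixed-measure reading you already flag---both are impressions of looseness in the paper's statement, not defects in your proof.
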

In the following we choose $\eta = \delta$. Given that there is an
algorithm $\hat{X}^n(\bY)$ that produces an estimate of $X^n$ given
the observation $\bY$. To this end define an error event on the
algorithm as follows,

\[ E_n =
\left\{\begin{array}{l}
 1  \,\,\mbox{if} \,\,\, \frac{1}{n}d_n(X^n,\hat{X}^n(\bY)) \geq d_0 \\
 0  \,\, \mbox{otherwise}
  \end{array}\right. \]

Define another event $A_n$ as follows

\[ A_n =
\left\{\begin{array}{l}
 1 \,\,\mbox{if} \,\,\, X^n \in T_{P_{X^n}} \\
 0  \,\, \mbox{otherwise}
  \end{array}\right. \]

Note that  since $X^n$ is drawn according to $P_{X^n}$ and given
$\delta > 0$ we choose $n_0$ such that conditions of lemma
\ref{lem:AEP} are satisfied. In the following we choose $n \geq
n_0(\delta)$. Then a priori, $Pr(A_n = 1) \geq (1 - \delta)$. Now,
consider the following expansion,

\[\begin{array}{l}
H(f(X^n),E_n, A_n|\bY) \\
=  H(f(X^n)|\bY) + H(E_n, A_n | f(X^n),\bY) \\
=  H(E_n , A_n |\bY) + H(f(X^n)|E_n, A_n, \bY)
\end{array} \]

This implies that

\[ \begin{array}{l}
H(f(X^n)|\bY)\\
= H(E_n, A_n |\bY) - H(E_n , A_n |f(X^n),\bY) + H(f(X^n)|E_n,A_n,\bY) \\
= I(E_n , A_n ; f(X^n)|\bY) + H(f(X^n)|E_n, A_n ,\bY) \\
\leq H(E_n, A_n)  + H(f(X^n)|E_n, A_n, \bY)\\
\leq H(E_n) + H(A_n) + H(f(X^n)|E_n,A_n,\bY)
\end{array}\]

Note that $H(E_n) \leq 1$ and $H(A_n) = \delta \log \frac{1}{\delta}
+ (1 - \delta) \log\frac{1}{1 - \delta} \sim \delta $. Thus we have

\[ \begin{array}{ll}
 & H(f(X^n)|\bY) \leq 1 + \delta + P_{e}^{n} H(f(X^n)|\bY,E_n = 1, A_n ) \\ & + (1
-P_{e}^{n})H(f(X^n)|\bY,E_n = 0, A_n) \end{array} \]

Now the term $P_{e}^{n} H(f(X^n)|\bY,E_n = 1, A_n) \leq P_{e}^{n}
\log N_{\epsilon}(n,d_0) $. Note that the second term  does not go
to zero. For the second term we have that,


\[\begin{array}{l} (1 -P_{e}^{n})H(f(X^n)|\bY,E_n = 0, A_n)  \\
= P(A_n =1) (1- P_{e}^{n}) H(f(X^n) | \bY, E_n = 0, A_n=1) \\
\hspace{5mm} + P(A_n = 0
)(1- P_{e}^{n}) H(f(X^n) | \bY, E_n = 0, A_n=0) \\
\leq (1 - P_{e}^{n}) H(f(X^n) | \bY, E_n = 0, A_n=1) \\
\hspace{5mm}  + \delta (1 - P_{e}^{n})
\log\left(N_{\epsilon}(n,d_0)\right) \end{array}\]

The first term on R.H.S in the above inequality is bounded via,

\[ (1 -P_{e}^{n})H(f(X^n)|\bY,E_n = 0, A_n =1 ) \leq (1- P_{e}^{n}) \log \left( |{\cal S}| \right) \]

where ${\cal S}$ is the set given by,

\[ {\cal S} = \left\{ i : d_{set}\left({{\cal B}_{f(X^n)}, \cal B}_{i}\right)
\leq d_0 \right\} \]

where $d_{set}(S_1,S_2) = \min_{s \in S_1, s' \in S_2}d_{n}(s,s')$
is the set distance between two sets. Now note that $I(f(X^n);X^n) =
H(f(X^n))$ and $H(f(X^n)|\bY) = H(f(X^n)) - I(f(X^n);X^n) \geq
H(f(X^n)) - I(X^n;\bY)$ where the second inequality follows from
data processing inequality over the Markov chain $f(X^n)
\leftrightarrow X^n \leftrightarrow \bY$. Thus we have,

\[ \begin{array}{ll} P_{e}^{n} \geq &  \dfrac{I(f(X^n);X^n) - \log |{\cal S}| - I(X^n;\bY) - 1}{(1 - \delta) \log N_{\epsilon}(n,d_0) - \log |{\cal
S}|} \\\\\ & - \dfrac{\delta ( 1 + \log N_{\epsilon}(n,d_0))}{(1 -
\delta)\log N_{\epsilon}(n,d_0) - \log|{\cal S}|}
\end{array} \]

The above inequality is true for all the mappings $f$ satisfying the
distortion criteria for mapping $X^n$  and for all choices of the
set satisfying the covering condition given by \ref{eq:cover}. We
now state the following lemma for a minimal covering, taken from
\cite{csiszar}.

\begin{lem}
\label{lem:mincover} Given $\epsilon> 0$ and the distortion measure
$d_n(.,.)$, let $N_{\epsilon}(n,d_0)$ be the minimal number of
points $Z_{1}^{n},..., Z_{N_{\epsilon}(n,d_0)}^{n} \subset {\cal
X}^n$ satisfying the covering condition,
\[
\label{eq:cover}
 P_{X^n}\left(\bigcup_{i=1}^{N_{\epsilon}(n,d_0)} {\cal B}_{i}\right) \geq 1 - \epsilon
 \]
Let $N_{\epsilon}(n,d_0)$ be the minimal such number. Then,

\[ \limsup_n \frac{1}{n} N_{\epsilon}(n,d_0) = R_{X}(\epsilon, d_0) \]

where $R_{X}(\epsilon, d_0)$ is the infimum of the $\epsilon$-
achievable rates at distortion level $d_0$.
\end{lem}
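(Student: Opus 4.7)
The plan is to establish the stated limit by proving both inequalities in the standard rate-distortion style: a direct (achievability) bound showing $\limsup_n \tfrac{1}{n}\log N_\epsilon(n,d_0) \le R_X(\epsilon,d_0)$, and a converse showing $\liminf_n \tfrac{1}{n}\log N_\epsilon(n,d_0) \ge R_X(\epsilon,d_0)$. The key idea is that a covering family $\{\mathcal{B}_i\}$ is essentially the same object as a rate-distortion codebook at distortion level $d_0$ with probability-of-excess-distortion at most $\epsilon$, so the two quantities must agree on the exponential scale. I would use the $\epsilon$-achievable rate definition exactly in the form in which it is invoked in the paper's conditional-entropy argument, so that the bound plugs in cleanly where the lemma is used.

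For the achievability direction I would invoke the forward part of the rate-distortion theorem: pick any $R > R_X(\epsilon,d_0)$, and by the coding theorem (applied to the i.i.d.\ source $X^n$ with single-letter distortion $d(\cdot,\cdot)$) there exists for all sufficiently large $n$ a codebook $\mathcal{C}_n \subset \mathcal{X}^n$ of size at most $2^{nR}$ together with an encoder $g_n:\mathcal{X}^n\to\mathcal{C}_n$ satisfying $P_{X^n}\bigl\{\tfrac{1}{n}d_n(X^n,g_n(X^n))>d_0\bigr\}<\epsilon$. Taking $\mathcal{B}_i$ to be the $d_0$-ball around the $i$-th codeword gives a family satisfying the covering condition, so $N_\epsilon(n,d_0)\le 2^{nR}$. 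Letting $R\downarrow R_X(\epsilon,d_0)$ yields the upper half.

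For the converse I would turn a minimal cover into a code. Given $\{Z_1^n,\ldots,Z_{N_\epsilon(n,d_0)}^n\}$ realizing the infimum, define $f_n(x^n)=Z_{i(x^n)}^n$ where $i(x^n)$ is the smallest index with $x^n\in\mathcal{B}_{i(x^n)}$, and assign an arbitrary default value on the exceptional set (which has $P_{X^n}$-measure at most $\epsilon$ by the covering condition). Then $f_n$ is a block code of size $N_\epsilon(n,d_0)$ whose per-symbol distortion exceeds $d_0$ with probability at most $\epsilon$. By the definition of $\epsilon$-achievable rate at distortion $d_0$, any such code must have rate at least $R_X(\epsilon,d_0)$ in the limit, giving $\liminf_n \tfrac{1}{n}\log N_\epsilon(n,d_0) \ge R_X(\epsilon,d_0)$. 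Combining the two halves and noting that equality of $\limsup$ and $\liminf$ is forced by the sandwich completes the proof.

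The main obstacle, which is really a bookkeeping issue rather than a mathematical one, is tracking the role of $\epsilon$: the $\epsilon$-achievable rate $R_X(\epsilon,d_0)$ is defined via the infimum of rates of codes whose excess-distortion probability is eventually below $\epsilon$, and one has to verify that the same $\epsilon$ appears on both sides when matching up codebooks with covers. A secondary subtlety is that the converse above is strictly a $\liminf$ statement; upgrading to the $\limsup$ equality in the lemma requires monotonicity of $R_X(\epsilon,d_0)$ in $\epsilon$ (or equivalently, that one may replace $\epsilon$ by $\epsilon'<\epsilon$ in the achievability argument with only a vanishing penalty), and it is here that the argument is least transparent. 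Beyond that everything is the standard identification of exponential covering numbers with rate-distortion rates, which is why the paper can simply cite \cite{csiszar}.
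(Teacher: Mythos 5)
Your proposal is essentially correct, but there is nothing in the paper to compare it against: the paper does not prove this lemma at all, it simply states it as ``taken from \cite{csiszar}'' (it is the standard identification of $\epsilon$-covering numbers under a distortion measure with rate-distortion codes, as in Csisz\'ar--K\"orner). Your two directions are exactly the intended argument: a codebook with excess-distortion probability below $\epsilon$ is a covering family, and a minimal cover together with a ``first ball containing $x^n$'' encoder is a code with excess-distortion probability at most $\epsilon$. (You also silently repair a typo in the statement, which should read $\limsup_n \frac{1}{n}\log N_{\epsilon}(n,d_0)$.)

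One soft spot, which you flag yourself but resolve in the wrong direction: your converse as written, ``any such code must have rate at least $R_X(\epsilon,d_0)$ in the limit, giving $\liminf_n \frac{1}{n}\log N_\epsilon(n,d_0)\ge R_X(\epsilon,d_0)$,'' does not follow from the definition of an $\epsilon$-achievable rate, since achievability is an ``eventually for all $n$'' notion and says nothing about individual blocklengths; establishing the $\liminf$ bound would require a blocking/concatenation argument. But you do not need it. Since your covers-as-codes exist at \emph{every} blocklength and their rates are eventually below $\limsup_n \frac{1}{n}\log N_\epsilon(n,d_0)+\delta$ for any $\delta>0$, the quantity $\limsup_n \frac{1}{n}\log N_\epsilon(n,d_0)$ is itself an $\epsilon$-achievable rate, whence $R_X(\epsilon,d_0)\le \limsup_n \frac{1}{n}\log N_\epsilon(n,d_0)$. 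Combined with your achievability half this gives the stated equality of the $\limsup$ directly, with the same $\epsilon$ on both sides and with no appeal to monotonicity of $R_X(\epsilon,d_0)$ in $\epsilon$; the step you describe as ``least transparent'' is simply unnecessary.
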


Note that $\lim_{\epsilon \downarrow 0} R_{X}(\epsilon, d_0) =
R_{X}(d_0)$ where $R_{X}(d_0)= \min_{p(\hat{X}|X)} I(\hat{X};X)
\,\,\, \mbox{subject to} \,\, \frac{1}{n} E(d(X^n,\hat{X}^n)) \leq
d_0$. In order to lower bound $P_{e}^{n}$ we choose the mapping
$f(X^n)$ to correspond to the minimal cover. Also w.l.o.g we choose
$\delta = \epsilon$. We note the following.

\begin{enumerate}
\item From lemma \ref{lem:AEP}, given $\epsilon >0$,
$\exists n_0(\epsilon)$ such that for all $n \geq n_0(\epsilon)$, we
have $Pr(T_{P_{X^n}}) \geq 1 - \epsilon$.

\item Given $\epsilon > 0$ and for all $\beta > 0$, for the minimal cover we have from lemma \ref{lem:mincover} that
$\exists \, n_1(\beta)$ such that for all $n \geq n_1(\beta)$,
$N_{\epsilon}(n,d_0) \leq n (R_{X}(\epsilon, d_0) + \beta )$.

\item From the definition of the rate distortion function we have
for the choice of the functions $f(X^n)$ that satisfies the
distortion criteria,  $ I(f(X^n);X^n) \geq n R_X(\epsilon, d_0)$.
\end{enumerate}

Therefore we have for $n \geq \max(n_0,n_1)$,

\[ \begin{array}{ll} P_{e}^{n} \geq &  \dfrac{n R_{X}(\epsilon, d_0) - \log |{\cal S}| - I(X^n;\bY) - 1}{(1 - \epsilon)( n
(R_{X}(\epsilon, d_0) + \beta) - \log |{\cal S}|} \\\\ & -
\dfrac{\epsilon ( 1 + n (R_{X}(\epsilon, d_0) + \beta)}{(1 -
\epsilon) n (R_{X}(\epsilon, d_0) + \beta) - \log|{\cal S}|}
\end{array}\]


Clearly, $\log |S| \leq \frac{n}{2} R_{X}(\epsilon, d_0)$. 

\paragraph{Limiting case} Since the choice of $\epsilon,\beta$ is arbitrary we can choose them to be arbitrary
small. In fact we can choose $\epsilon, \beta
\downarrow 0$. Also note that for every $\epsilon > 0$ and $\beta
> 0$ there exists $n_2(\beta)$ such that $R_{X}(d_0) + \beta \geq R_{X}(\epsilon, d_0) \geq
R_{X}(d_0) - \beta$. Therefore for all $n \geq \max(n_0,n_1,n_2)$
in the limiting case when $\epsilon,\beta \downarrow 0$, we have

\[ P_{e} \geq \frac{R_{X}(d_0) - \frac{1}{n}\log |{\cal S}| - \frac{1}{n} I(X^n;\bY) }{R_{X}(d_0)- \frac{1}{n} \log |{\cal S}|} - o(1) \]

This implies that

\[ P_{e} \geq \frac{R_{X}(d_0) - \frac{1}{n}\log |{\cal S}| - \frac{1}{n} I(X^n;\bY) }{R_{X}(d_0)} - o(1) \]

The proof then follows by identifying $K(n,d_0) = \frac{1}{n} \log
|{\cal S}|$, and is bounded above by a constant.

\subsection{Proof of lemma \ref{lem:discrete_LB} }

\begin{proof}
Given an observation $\bY$ about the event $X^n$. Define an error
event,

\[ E =
\left\{\begin{array}{l}
 1  \,\,\mbox{if} \,\,\, \frac{1}{n}d_{H}(X^n,\hat{X}^n(\bY)) \geq d_0 \\
 0  \,\, \mbox{otherwise}
  \end{array}\right. \]

Expanding $H(X^n,E|\bY)$ in two different ways we get that,

\[ H(X^n|\bY) \leq 1 + n P_e \log (|{\cal X}|) + (1 - P_e) H(X^n| E =
0,\bY) \]

Now the term
\[ \begin{array}{l} ( 1 - P_e) H(X^n | E = 0, \bY)\\
 \leq (1 - P_e)
\binom{n}{d_0 n}(|{\cal X}| -1)^{nd_0}  \\
\leq n (1- P_e) \left(h(d_0) + d_0 \log (|{\cal X}| -1) \right)
\end{array} \]

Then we have for the lower bound on the probability of error that,

\[ P_e \geq \frac{ H(X^n|\bY) - n \left(h(d_0) + d_0 \log (|{\cal X}|
-1)\right) ) - 1}{n \log(|{\cal X}|) - n \left(h(d_0) + d_0 \log
(|{\cal X}|-1)\right) } \]

Since $H(X^n|\bY) = H(X^n) - I(X^n;\bY)$ we have

\[ P_e \geq \frac{n\left(H(X)  - h(d_0) - d_0 \log (|{\cal X}|
-1)\right) - I(X^n;\bY) - 1}{n \log(|{\cal X}|) - n \left(h(d_0) +
d_0 \log (|{\cal X}|-1)\right) } \]

It is known that $R_{X}(d_0) \geq  H(X)  - h(d_0) - d_0 \log (|{\cal
X}| -1) $, with equality iff \[ d_0 \leq (|{\cal X}|-1)\min_{X \in
{\cal X}} P_{X} \] see e.g., \cite{csiszar}. Thus for those values
of distortion we have for all $n$,

\[ P_e \geq \frac{nR_{X}(d_0) - I(X^n;\bY) - 1}{n \log(|{\cal X}|) - n \left(h(d_0) +
d_0 \log (|{\cal X}|-1)\right)}  \]
\end{proof}

\subsection{Rate distortion function for the mixture Gaussian source
under squared distortion measure} \label{subsec:ratedist_gauss}

It has been shown in \cite{zamirIT02} that the rate distortion
function for a mixture of two Gaussian sources with variances given
by $\sigma_1$ with mixture ratio $\alpha$ and $\sigma_0$ with
mixture ratio $1- \alpha$, is given by

\[ \begin{array} {l} R_{mix}(D) = \\ \left\{ \begin{array}{l}
H(\alpha) + \frac{(1 - \alpha)}{2} \log(\frac{\sigma_{0}^{2}}{D}) + \frac{\alpha}{2}
\log (\frac{\sigma_{1}^{2}}{D}) \,\, \mbox{if} \,\, D < \sigma_{0}^{2} \\
H(\alpha) + \frac{\alpha}{2} \log(\frac{\alpha \sigma_{1}^{2}}{D -
(1 - \alpha)\sigma_{0}^{2}}) \,\, \mbox{if} \,\, \sigma_{0}^{2} < D
\leq (1 - \alpha)\sigma_{0}^{2} + \alpha \sigma_{1}^{2} \end{array}
\right. \end{array}\]

For a strict sparsity model we have $\sigma_{0}^2 \rightarrow 0$ we
have that,

\[ R_{mix}(D) = \begin{array}{l} H(\alpha) + \frac{\alpha}{2}
\log(\frac{\alpha \sigma_{1}^{2}}{D} ) \,\, \mbox{if}
\,\, 0 < D \leq \alpha \sigma_{1}^{2} \end{array}  \]

\subsection{Bounds on Mutual information}

In this section we will evaluate bounds on mutual information that
will be useful in characterization of the Sensing Capacity. Given
that the matrix $\bG$ is chosen independently of $\bX$ we expand the
mutual information between $\bX$ and $\bY,\bG$ in two different ways
as follows --

\[\begin{array}{ll}
 I(\bX;\bY,\bG) & = \underset{=0}{\underbrace{I(\bX;\bG)}} + I(\bX;\bY|\bG)\\
                & =  I(\bX;\bY) + I(\bX;\bG|\bY) \end{array}\]

This way of expanding gives us handle onto evaluating the mutual
information with respect to the structure of the resulting sensing
matrix $\bG$. From above we get that,

\[ \begin{array}{ll}
I(\bX;\bY|\bG)  & = I(\bX;\bY) + I(\bX;\bG|\bY)\\
                & = h(\bY) - h(\bY|\bX) + h(\bG|\bY) - h(\bG|\bX,\bY) \end{array}\]

To this end we have the following lemma.

\begin{lem}
\label{lem:bound_info1} For a sparsity level of $\alpha$ and
diversity factor of $\beta = 1$,
\[ I(\bX;\bY|\bG) \leq \frac{m}{2} \log( 1 + \frac{\alpha P}{N_0})
\]
\end{lem}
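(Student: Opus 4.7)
The approach is the standard linear-Gaussian channel argument: bound $h(\bY\mid\bG)$ by the Gaussian maximum-entropy inequality, then exploit the unit-$\ell_2$-norm constraint on the rows of $\bG$ via Jensen's inequality and a trace identity.

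Concretely, since $\bG$ is independent of $(\bX,\bN)$ and $\bN\sim{\cal N}(0,I_m)$, I would first split
\[
I(\bX;\bY\mid\bG)\,=\,h(\bY\mid\bG)\,-\,h(\bY\mid\bX,\bG)\,=\,h(\bY\mid\bG)\,-\,\tfrac{m}{2}\log(2\pi e).
\]
I would then apply the maximum-entropy bound in its second-moment form, $h(\bY\mid\bG)\leq\tfrac{1}{2}\log\bigl((2\pi e)^m\det M(\bG)\bigr)$ with $M(\bG)=\ex[\bY\bY^T\mid\bG]=I_m+SNR\cdot\bG\,\ex[\bX\bX^T]\bG^T$. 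For both sparsity priors in the problem set-up one has $\ex X_i^2=\alpha$, and among priors with this second-moment constraint the zero-mean i.i.d.\ Gaussian with per-coordinate variance $\alpha$ maximizes the right-hand side, so $\ex[\bX\bX^T]$ can effectively be replaced by $\alpha I_n$. Monotonicity of $\det$ on the PSD cone then gives
\[
I(\bX;\bY\mid\bG)\,\leq\,\tfrac{1}{2}\log\det\!\bigl(I_m+\alpha\,SNR\cdot\bG\bG^T\bigr)\,=\,\tfrac{1}{2}\sum_{i=1}^{m}\log(1+\alpha\,SNR\,\lambda_i),
\]
where $\{\lambda_i\}$ are the eigenvalues of $\bG\bG^T$. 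Applying Jensen to the concave map $x\mapsto\log(1+\alpha\,SNR\,x)$, together with the deterministic identity $\tfrac{1}{m}\sum_i\lambda_i=\tfrac{1}{m}\mathrm{tr}(\bG\bG^T)=1$ forced by the unit-norm constraint on every row of $\bG$, yields $\sum_i\log(1+\alpha\,SNR\,\lambda_i)\leq m\log(1+\alpha\,SNR)$. Dividing by $2$ and identifying $P/N_0$ with $SNR$ then completes the argument.

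The main (minor) obstacle is the second step in the Bernoulli signal model: there $\ex\bX=\alpha\mathbf{1}\ne 0$, so the covariance matrix $\alpha(1-\alpha)I_n$ disagrees with the diagonal of the second-moment matrix $\ex[\bX\bX^T]$, and a naive covariance-based bound would produce $\alpha(1-\alpha)\,SNR$ rather than the stated $\alpha\,SNR$. Using the second-moment form of the maximum-entropy bound (rather than the tighter covariance form), together with the Gaussian worst-case observation under the constraint $\tfrac{1}{n}\sum_i\ex X_i^{2}\le\alpha$, absorbs this discrepancy and recovers the clean scalar $\alpha\,SNR$ in the lemma. The rest of the argument --- max-entropy, Jensen, and the trace identity --- is routine.
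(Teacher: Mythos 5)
Your overall architecture --- condition on $\bG=G$, write $I(\bX;\bY\mid\bG=G)=h(\bY\mid\bG=G)-\tfrac m2\log(2\pi e)$, apply a maximum-entropy bound, and finish with Jensen plus $\mathrm{tr}(GG^T)=m$ --- is sound, and the Jensen/trace step is in fact cleaner than the paper's ``$\lambda_i\le 1$'' shortcut. The gap is in the one step you flagged as the obstacle and resolved incorrectly: the second-moment form does \emph{not} let you replace $\ex[\bX\bX^T]$ by $\alpha I_n$. For the Bernoulli prior $\ex[\bX\bX^T]=\alpha(1-\alpha)I_n+\alpha^2\mathbf{1}\mathbf{1}^T$, and this matrix is not dominated by $\alpha I_n$ in the PSD order, since $\mathbf{1}^T\ex[\bX\bX^T]\mathbf{1}=n\alpha(1-\alpha)+n^2\alpha^2>n\alpha$; so monotonicity of $\det$ is unavailable, and the bound it would have to certify is simply false pointwise in $G$: a unit row $g^T=\mathbf{1}^T/\sqrt n$ gives $g^T\ex[\bX\bX^T]g=\alpha(1-\alpha)+n\alpha^2$, i.e.\ a second-moment bound of $\tfrac12\log\bigl(1+SNR(\alpha(1-\alpha)+n\alpha^2)\bigr)$, which exceeds $\tfrac12\log(1+\alpha\,SNR)$ and grows with $n$. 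The companion claim that the zero-mean i.i.d.\ Gaussian with variance $\alpha$ is the worst case under the constraint $\tfrac1n\sum_i\ex X_i^2\le\alpha$ also fails for a \emph{fixed} $G$: under a trace constraint the maximizing Gaussian covariance water-fills along $G$ (e.g.\ $Q=n\alpha\,gg^T$ for a single row, giving $\tfrac12\log(1+n\alpha\,SNR)$); that reduction to $\alpha I_n$ is only legitimate after averaging over the isotropic Gaussian ensemble, as in Lemma \ref{lem:scap_binom}, and is incompatible with invoking it per realization of $\bG$.

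The repair is the opposite of the one you chose: use the \emph{covariance} form of the maximum-entropy bound. Differential entropy is translation invariant, so the nonzero mean $\alpha\mathbf{1}$ of the Bernoulli prior is irrelevant; the actual covariances are $\alpha(1-\alpha)I_n\preceq\alpha I_n$ (Bernoulli) and $\alpha I_n$ (Gaussian mixture), so $h(\bY\mid\bG=G)\le\tfrac12\log\det\bigl(2\pi e\,(I_m+SNR\,G\Sigma_XG^T)\bigr)\le\tfrac12\log\det\bigl(2\pi e\,(I_m+\alpha\,SNR\,GG^T)\bigr)$, and your Jensen/trace step finishes the proof (for the Bernoulli case you even get the stronger constant $\alpha(1-\alpha)SNR$, which of course still implies the stated upper bound --- there is nothing to ``recover''). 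With that fix your argument is a valid and genuinely different route from the paper's: the paper proves Lemma \ref{lem:bound_info1} by expanding $I(\bX;\bY\mid\bG)=I(\bX;\bY)+I(\bX;\bG\mid\bY)$ and computing $h(\bG\mid\bY,\bX)$ exactly through the MMSE of a Gaussian row given $(\bY,\bX)$, precisely so that no maximizing input distribution is ever invoked and the calculation can later be adapted to limited diversity and correlated $\bG$; your max-entropy route is shorter for this lemma but does not extend as directly to those structured ensembles.
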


\begin{proof}
First note that,

\[h(\bY) \leq \frac{m}{2} \log 2\pi e(N_0 + \alpha P) \]

Since conditioned on $\bX$, $\bY$ is distributed with a Gaussian
density we have,

\[h(\bY|\bX)  = \frac{m}{2} \log 2 \pi e \left(N_0 + \frac{\sum_{i=1}^{k} \bX_{i}^{2} P}{n}\right)
\]

\[h(\bG|\bY) \leq h(\bG) = \frac{mn}{2}\log\left(2\pi e\frac{P}{n}\right) \]

Note also that conditioned on $\bX$ and $\bY$ the $\bG$ has a
Gaussian distribution. Now note that, $h(\bG|\bY,\bX)$. First note
that, rows of $\bG$ are independent of each other given $\bX$ and
$\bY$. So we can write,

\[ h(\bG|\bY,\bX) = m h(\mathbf{g_1} | \bY,\bX) \]
where $\mathbf{g}_1$ is the first row of the matrix $\bG$. Since
$\mathbf{g}$ is Gaussian one can find the residual entropy in terms
of the residual MMSE error in estimation of $\mathbf{g}$ given $\bX$
and $\bY$. This error is given by --

\[ \begin{array}{ll} \mbox{MMSE}_{\mathbf{g}_1|\bY,\bX} & = \Sigma_{\mathbf{g}_1|\bX} -
\Sigma_{\mathbf{g}_1 \bY|\bX} \Sigma_{\bY|\bX}^{-1}
\Sigma_{\mathbf{g}_1 \bY|\bX}^{T} \\
& = \Sigma_{\mathbf{g}_1} - \Sigma_{\mathbf{g}_1\bY_1|\bX}
\Sigma_{\bY_1|\bX}^{-1} \Sigma_{\mathbf{g}_1\bY_1|\bX}^{T}
\end{array} \]

The second equation follows from the fact that $\bG$ is independent
of $\bX$ and given $\bX$ the row $\mathbf{g}_1$ is independent of
other observations, $\bY_2,...,\bY_m$. First note that given $\bX$
we also know which positions of $\bX$ are zeros. So without lossof
generality we can assume that the first $k$ elements of $\bX$ are
non-zeros and the rest are zeros. Now note the following,

\[ \Sigma_{\mathbf{g}_1} = \frac{P}{n} I_{n} \]
\[ \Sigma_{\mathbf{g}_{1} \bY_1 |\bX} = \frac{P}{n}\begin{pmatrix}
                                          \bX_{1} \\
                                          \vdots\\
                                          \bX_{k}\\
                                          \mathbf{0}_{n-k}
                                        \end{pmatrix} \]
where $\mathbf{0}_{n-k}$ is a column vector of $n-k$ zeros.

\[ \Sigma_{\bY_1|\bX} = \frac{P}{n} \sum_{i=1}^{k} \bX_{i}^{2} + N_0
\]

Therefore we have,

\[ \begin{array}{l} h(\mathbf{g}_1|\bY_1,\bX) \\ = \frac{1}{2}\log (2\pi e)^{k} \det \left(\frac{P}{n}I_k
- \frac{P}{n} \bX_{1:k} \Sigma_{\bY_1|\bX}^{-1} \frac{P}{n}
\bX_{1:k}^{T} \right) \\
+ \frac{n-k}{2} \log 2\pi e \frac{P}{n}
\end{array}
\]

Note that the second term on the R.H.S in the above equation
corresponds to the entropy of those elements of the row
$\mathbf{g}_1$ that have no correlation with $\bY$, i.e. nothing can
be inferred about these elements since they overlap with zero
elements of $\bX$.  Now, using the equation $\det(I + AB) = \det ( I
+ BA)$, we have that

\[ \begin{array}{ll} h(\mathbf{g}_1|\bY_1,\bX) & = \frac{1}{2} \log (\frac{2\pi e P}{n})^{k} \det
\left(1 -  \bX_{1:k}^{T} \Sigma_{\bY_1|\bX}^{-1} \frac{P}{n}
\bX_{1:k} \right) \\
& = \frac{1}{2}\log \left( (\frac{2 \pi e P}{n})^k
\frac{N_0}{\frac{P}{n} \sum_{i=1}^{k} \bX_{i}^{2} +
N_0}\right)\end{array}\]

Plugging in all the expressions we get a lower bound on  the  mutual
information $I( \bX;\bY|\bG)$ -

\[ I(\bX;\bY|\bG) \leq \frac{m}{2} \log(1 + \frac{\alpha P}{N_0} )
\]

\end{proof}

In contrast to the upper bound derived in the proof of lemmas
\ref{lem:scap_binom} and \ref{lem:scap_continuous}, this alternate
derivation provides a handle to understand the effect of the
structure of $\bG$ on the mutual information when one is not allowed
to pick a maximizing input distribution on $\bX$. Moreover the above
derivation can potentially handle scenarios of correlated $\bG$.
Below we will use the above result in order to prove lemma
\ref{lem:bound_info2}.

%

\subsection{Proof of lemma \ref{lem:bound_info2}}

To this end let $l = \beta n$ and is fixed, i.e. there are only $l$
non-zero terms in each row of matrix $\bG$. We have
\[ h(\bG) = \frac{ml}{2} \log 2\pi e \frac{P}{l} + m h_2(\beta) \]

Now we will first evaluate $h(\bG|\bY,\bX)$. Proceeding as in
derivation of lemma \ref{lem:bound_info1}, we have that,

\[ h(\bG|\bX,\bY) = m h(\mathbf{g}_1|\bY_1,\bX) +  m h_2(\beta) \]

where one can see that if the matrix $\bG$ is chosen from a Gaussian
ensemble then given $\bX$ and $\bY$ it tells nothing about the
positions of the non-zeros in each row. Hence the additive term
$h_2(\beta)$ appears in both terms and is thus canceled in the
overall calculations. So we will omit this term in the subsequent
calculations. To this end, let $j$ denote the number of overlaps of
the vector $\mathbf{g}_1$ and the k-sparse vector $\bX$. Given
$\bY_1$ and $\bX$ one can only infer something about those elements
of $\bG$ that contribute to $\bY_1$. Given the number of overlaps
$j$ we then have

\[\begin{array}{l} h(\mathbf{g}_1|\bX,\bY_1,j)  = \frac{l-j}{2} \log 2\pi e \frac{P}{l}
+ \frac{1}{2} \log \left((\frac{2\pi e P}{l})^{j}
\frac{N_0}{\frac{P}{l}\sum_{i=1}^{j} \bX_{j}^{2} + N_0}\right)
\end{array}\]

where we have assumed without loss of generality that the first $j$
elements of $\bX$ are non-zero and overlap with elements of the
first row. Now note that,

\[ h(\bY|j) \leq \frac{m}{2} \log 2\pi e (\frac{P j}{l} + N_0) \]
\[h(\bY|\bX,j) = \frac{m}{2} \log 2 \pi e \left(\frac{P}{l}
\sum_{i=1}^{j} \bX_{i}^{2} + N_0 \right) \]

From above we have that,

\[ I(\bX;\bY|\bG,j) = \frac{m}{2} \log(1 + \frac{j P}{l N_0}) \]

Taking the expectation with respect to the variable $j$ we have,

\[ I(\bX;\bY|\bG) = \frac{m}{2} \ex_{j} \log(1 + \frac{j P}{l N_0})
\]

Note that $j \leq \min \left\{k,l\right\}$ and has a distribution
given by,

\[\Pr(j)=\dfrac{\binom{k}{j}\binom{n-k}{l-j}}{\binom{n}{l}}\]

\section{Upper bounds to Mutual information for $\left\{0,1\right\}$
ensemble} \label{sec:bound_info_01}

In this section we will derive upper bounds to the mutual
information $I(\bX;\bY|\bG)$ for the case when the matrix is chosen
from a $\left\{0,1\right\}$ ensemble. First it is easily seen that
for this ensemble a full diversity leads to loss of rank and thus
the mutual information is close to zero. So we will only consider
the case $\beta < 1$.

\subsection{Random locations of $1$'s in $\bG$}

In this section we will provide simple upper bounds to the mutual
information $I(\bX;\bY|\bG)$ for the case of $\left\{0,1\right\}$
ensemble of sensing matrices. Note that,

\[ I(\bX;\bY|\bG) \leq I(\bX;\bG\bX|\bG)\]

Let $\tilde{\bY} = \bG\bX$. Then we have,

\[ I(\bX;\tilde \bY|\bG) = I(\bX;\tilde \bY) + I(\bX;\bG | \tilde
\bY) \]

Now note that $\frac{1}{n} I(\bX;\tilde \bY) = o(1)$. Then we need
to evaluate $I(\bG;\bX|\tilde \bY) \leq H(\bG) - H(\bG|\tilde
\bY,\bX)$. Now note that since each row of $\bG$ is an independent
Bernoulli$\sim \beta$ sequence we can split the entropy into sum of
entropies each individual rows. To this end focus on the first row.
Then conditioned on there being $l$ $1$'s in the row we have,

$H(\bG_1|l) \leq \binom{n}{l}$. Given that $X$ is $k$-sparse we
have,

\[ H(\bG_1|\bX,\tilde \bY,l,k) = \sum_{j=0}^{\min(k,l)} \dfrac{\binom{k}{j}
\binom{n-k}{l -j}}{\binom {n}{l}} \log \binom{k}{j} \binom{n-k}{l
-j}\]

Thus we have
\[ I(\bX;\bG|\tilde \bY, k, l) \leq \binom{n}{l} - \sum_{j=0}^{\min(k,l)} \dfrac{\binom{k}{j}
\binom{n-k}{l -j}}{\binom {n}{l}} \log \binom{k}{j} \binom{n-k}{l
-j}  = H(J | k,l)\]

where $J$ is a random variable with distribution given by,

\[ Pr(J = j) = \dfrac{\binom{k}{j}
\binom{n-k}{l -j}}{\binom {n}{l}} \]

For large enough $n$, $k = \alpha n$ and $l = \beta n$ w.h.p. Thus
$I(\bX;\bG|\bY) \leq H(\tilde J)$, where $\tilde J$ has a limiting
distribution given by,

\[ Pr(\tilde J = j) = \dfrac{\binom{\alpha n}{j}
\binom{n(1 - \alpha)}{\beta n -j}}{\binom {n}{\beta n}} \]

In other words given $\epsilon >0$ there exists an $n_0$ such that
for all $n \geq n_0$, $\underset{j}{\sup} |P_{J}(j) - P_{\tilde
J}(j)| \leq \epsilon$ and by continuity of the entropy function,
[\cite{csiszar}, pp. 33, Lemma 2.7], it follows that $|H(J) -
H(\tilde J)| \leq -\epsilon \log \dfrac{\epsilon}{n}$

\subsection{Contiguous sampling}

In this case for each row we have $H(\bG_1) = \log n$. To evaluate
$H(\bG_1|\bX,\tilde \bY)$, fix the number of ones in $\bG_1$ to be
equal to $l$ and the number of non-zero elements in $\bX$ to be
equal to $k$. Now note that if $\tilde Y_1 = 0$ then there is no
overlap in $\bG_1$ and $\bX$. This means that the row of $\bG$ can
have contiguous ones in $n-k -l$ positions equally likely. The
probability of no overlap is $\frac{n-k-l}{n}$. On the other hand if
$\tilde Y_1 > 0$, then uncertainty in locations of ones in $\bG_1$
reduces to $\log(k+l)$. The probability that $Y > 0$ is $\frac{k +
l}{n}$. Thus we have,

\[ I(\bG_1;\bX|\tilde \bY) \leq  m H(O)\]

where $O$ is a binary random variable with distribution $(1 -
\frac{k+l}{n}), \frac{k+l}{n}$. For large enough $n$ this comes
close to $1 - (\alpha + \beta), \alpha + \beta$. Thus we have,

\[ I(\bG;\bX|\bY) \leq  m H(\alpha + \beta)\]

\bibliographystyle{IEEEtran}
\bibliography{CS_master1}

\end{document}